\def\ps@headings{%
\def\@oddhead{\mbox{}\scriptsize\rightmark \hfil \thepage}%
\def\@evenhead{\scriptsize\thepage \hfil \leftmark\mbox{}}%
\def\@oddfoot{}%
\def\@evenfoot{}}
\newtheorem{theorem}{Theorem}[section]
\newtheorem{definition}[theorem]{Definition}
\newtheorem{lemma}[theorem]{Lemma}
\newtheorem{observation}[theorem]{Observation}
\newenvironment{proof}{\begin{trivlist}
\item[\hspace{\labelsep}{\bf\noindent Proof: }]}{\qedsymb\end{trivlist}}
\newcommand{\qedsymb}{\hfill{\rule{2mm}{2mm}}}
\newcommand{\remove}[1]{}
\def\N{{\cal N}}
\def\R{{\cal R}}
\def\p{{\mathbf{p}}}
\def\C{\widetilde C}
\def\paths{{\cal P}}
\begin{document}

\sloppy



\title{The Impact of Exponential Utility Costs in Bottleneck Routing Games}

\author{
\IEEEauthorblockN{Rajgopal Kannan}
\IEEEauthorblockA{
Computer Science Department\\
Louisiana State University\\
279 Coates Hall\\
Baton Rouge, LA 70803, USA\\
Email: rkannan@csc.lsu.edu}
\and
\IEEEauthorblockN{Costas Busch}
\IEEEauthorblockA{
Computer Science Department\\
Louisiana State University\\
286 Coates Hall\\
Baton Rouge, LA 70803, USA\\
Email: busch@csc.lsu.edu}
}

\date{}

\maketitle

\begin{abstract}
We study bottleneck routing games where the
social cost is determined by the worst congestion
on any edge in the network.
Bottleneck games have been studied in the literature
by having the player's utility costs to be determined by the worst congested edge in
their paths.
However,
the Nash equilibria of such games are inefficient since the price of anarchy can
be very high with respect to the parameters of the game.
In order to obtain smaller price of anarchy we
explore {\em exponential bottleneck games} where the utility costs of the players
are exponential functions on the congestion of the edges in their paths.
We find that exponential bottleneck games are very efficient giving
a poly-log bound on the price of anarchy: $O(\log L \cdot \log |E|)$,
where $L$ is the largest path length in the players strategy sets and $E$
is the set of edges in the graph.
\end{abstract}



\section{Introduction}
\label{section:intro}
Motivated by the selfish behavior of entities in communication
networks, we study routing games in general networks
where each packet's path is controlled independently by a selfish player.
We consider noncooperative games with $N$ players,
where each player has a {\em pure strategy profile}
from which it selfishly selects a single
path from a source node to a destination node
such that the selected path minimizes the player's utility cost function
(such games are also known as {\em atomic} or {\em unsplittable-flow} games).
We focus on {\em bottleneck games} where the objective for the social outcome
is to minimize $C$, the maximum congestion on any edge in the network.
Typically, the congestion on an edge is a non-decreasing function
on the number of paths that use the edge;
here, we consider the congestion to be simply the number of paths that use the edge.

Bottleneck congestion games have been studied
in the literature \cite{BO07,BM06,BKV08,BKV09}
where each player's utility cost is
the worst congestion on its path edges.
In particular, player $i$ has utility cost function
$\max_{e \in p_i} C_e$
where $p_i$ is the path of the player and
$C_e$ denotes the congestion of edge $e$.
In \cite{BO07} the authors observe that
bottleneck games are important in networks for various practical reasons.
In wireless networks the maximum congested edge is related to the
lifetime of the network since the nodes adjacent to high congestion edges
transmit large number of packets which results to
higher energy utilization.
Thus, minimizing the maximum edge congestion immediately translates
to longer network lifetime.
High congestion edges also result to congestion hot-spots in the network which
may slow down the performance of the whole network.
Hot spots may also increase the vulnerability of the network
to malicious attacks which aim to to increase the congestion of edges
in the hope to bring down the network or degrade its performance.
Thus, minimizing the maximum congested edge results to hot-spot avoidance
and also to more secure networks.

Bottleneck games are also important from a theoretical point of view
since the maximum edge congestion is immediately related to the
optimal packet scheduling.
In a seminal result,
Leighton {\it et al.} \cite{LMR94}
showed that there exist packet scheduling algorithms that
can deliver the packets along their chosen paths in time very close to $C+D$,
where $D$ is the maximum chosen path length.
This work on packet scheduling has been extended in \cite{CMS96,LMR94,LMR99,OR97,RT96}.
When $C \gg D$,
the congestion becomes the dominant factor in the packet scheduling performance.
Thus, smaller $C$ immediately implies faster delivery time for the
packets in the network.

A natural problem that arises concerns the effect of the players'
selfishness on the welfare of the whole network
measured with the {\em social cost} $C$.
We examine the consequence of the selfish behavior
in pure {\em Nash equilibria} which are stable states of the game in which no
player can unilaterally improve her situation.
We quantify the effect of
selfishness with the {\em price of anarchy}
($PoA$)~\cite{KP99,P01}, which expresses how much larger is the
worst social cost in a Nash equilibrium compared to the social cost
in the optimal coordinated solution.
The price of anarchy provides a measure for estimating
how closely do Nash equilibria of bottleneck routing games approximate the
optimal $C^*$ of the respective routing optimization problem.

Ideally, the price of anarchy should be small.
However, the findings in the literature show that bottleneck games are not
efficient, namely, the price of anarchy may be large.
In \cite{BO07} it is shown that if the edge-congestion function is bounded
by some polynomial with degree $d$ (with respect to the packets that use the edge)
then $PoA = O(|E|^d)$,
where $E$ is the set of edges in the graph.
In \cite{BM06} the authors consider the case $d=1$ and they show that $PoA = O(L + \log |V|)$,
where $L$ is the maximum path length in the players strategies and $V$ is the set of nodes.
This bound is asymptotically tight since there are game instances with $PoA = \Omega(L)$.
Note that $L \leq |E|$, and further $L$ may be significantly smaller than $|E|$.
However, $L$ can still be proportional to the size of the graph,
and thus the price of anarchy can be large.

\subsection{Contributions}

In this work we focus on exploring alternative utility
cost functions for the players
that have better impact on the social cost $C$.
We introduce {\em exponential bottleneck games}
where the player utilities are exponential functions
on the congestion of the edges of the paths.
In particular, the player utility cost function for player $i$ is:
$$\sum_{e \in p_i} 2^{C_e},$$
where $p_i$ is the player's chosen path.
Note that the new utility cost is a sum of exponential
terms on the congestion of the edges in the path
(instead of the max that we described earlier).
Using the new utility cost functions
we show that exponential games have always Nash equilibria
which can be obtained by best response dynamics.
The main result is that the price of anarchy is poly-log:
$$PoA = O(\log L \cdot \log |E|),$$
where $L$ is the maximum path length in the players strategy set
and $E$ is the set of edges in the graph.
This price of anarchy bound is a significant improvement over the
price of anarchy from the regular
utility cost functions described earlier.

Exponential cost functions are legitimate metrics for the utility costs of players
since they reflect the performance of the chosen paths according to congestion.
Each player is motivated to select a path
with lower utility cost
since it will provide a better quality path with lower congestion that can
affect positively the player's performance.
As we discuss in Section~\ref{section:conclusions},
the reason that we use exponential cost functions instead of polynomial
ones is that low degree polynomials give high price of anarchy.

\subsection{Related Work}

Congestion games were introduced and
studied in~\cite{monderer1,rosenthal1}.
Koutsoupias and Papadimitriou \cite{KP99} introduced
the notion of price of anarchy
in the specific {\em parallel link networks} model
in which they provide the bound $PoA = 3/2$.
Since then, many routing and congestion game models have been studied
which are distinguished by the network topology,
cost functions,
type of traffic (atomic or splittable),
and kind of equilibria (pure or mixed).
Roughgarden and Tardos \cite{roughgarden3}
provided the first result for splittable flows in general networks
in which they showed that $PoA\le 4/3$ for a player
cost which reflects to the sum of congestions of the edges of a path.
Pure equilibria with atomic flow have been studied in
\cite{BM06,CK05,libman1,rosenthal1,STZ04} (our work fits into this category),
and with splittable flow in
\cite{roughgarden1,roughgarden2,roughgarden3,roughgarden5}.
Mixed equilibria with atomic flow have been studied in
\cite{CKV02,czumaj1,FKP02,GLMMb04,GLMM04,GLMMR04,KMS02,KP99,LMMR04,MS01,P01},
and with splittable flow in
\cite{correa1,FKS02}.

Most of the work in the literature uses a cost metric
measured as the sum of congestions of all the edges of the player's path
\cite{CK05,GLMMR04,roughgarden2,roughgarden3,roughgarden5,STZ04}.
Our work differs from these approaches since we adopt
the exponential metric for player cost.
The vast majority of the work on routing games has been performed
for parallel link networks,
with only a few exceptions on general network topologies
\cite{BM06, CK05,correa1,roughgarden1},
which we consider here.

Our work is close to \cite{BM06},
where the authors consider the player cost $C_i$ and social cost $C$.
They prove that the price of stability is 1.
They show that the price of anarchy is bounded by $O(L + \log n)$,
where $L$ is the maximum allowed path length.
They also prove that $\kappa \leq PoA \leq c(\kappa^2 + log^2 n)$,
where $\kappa$ is the size of the largest edge-simple cycle in the graph and $c$ is a constant.
Some of the techniques that we use in our proofs
(for example expansion) were introduced in~\cite{BM06}.
Another related result for general
networks which has a brief discussion
of the convergence of maximum
player cost ($C_i$) games is \cite{libman1}
where the authors focus on parallel link networks, but
also give some results for general topologies on convergence to
equilibria.

Bottleneck congestion games have been studied in~\cite{BO07},
where the authors consider the maximum congestion metric in general networks
with splittable and atomic flow (but without considering path lengths).
They prove the existence and non-uniqueness
of equilibria in both the splittable and atomic
flow models.
They show that finding the best Nash equilibrium that
minimizes the social cost is a NP-hard problem.
Further, they show that the price of anarchy may be unbounded for specific
{\em edge congestion functions} (these are functions of the number of paths that use the edge).
If the edge congestion function is polynomial with degree $p$
then they bound the price of anarchy with $O(m^p)$,
where $m$ is the number of edges in the graph.
In the splittable case they show
that if the users always
follow paths with low congestion then the equilibrium
achieves optimal social cost.

%

\subsection*{Outline of Paper}
In Section \ref{section:definitions}
we give basic definitions.
In section \ref{section:stability} we show that
exponential bottleneck games have always Nash equilibria.
We study the price of anarchy in Section \ref{section:anarchy}.
We finish with conclusions and future work in Section \ref{section:conclusions}.

\section{Definitions}
\label{section:definitions}

\subsection{Path Routings}
Consider an arbitrary graph $G = (V,E)$ with nodes $V$ and edges $E$.
Let $\Pi = \{ \pi_1, \ldots, \pi_N \}$
be a set of packets such that each $\pi_i$ has
a source $u_i$ and destination $v_i$.
A {\em routing}
$\p=[p_1,p_2,\cdots,p_N]$ is a collection of paths,
where $p_i$ is a path for packet $\pi_i$ from $u_i$ to $v_i$.
We will denote by $E(p_i)$ the set of edges in path $p_i$.
Consider a particular routing $\p$.
The \emph{edge-congestion} of an edge $e$, denoted $C_e$,
is the number of paths in $\p$ that use edge $e$.
For any set of edges $A \subseteq E$,
we will denote by $C_{A} = \max_{e \in A} C_e$.
For any path $q$, the \emph{path-congestion} is $C_q = C_{E(q)}$.
For any path $p_i \in \p$,
we will also use the notation $C_i = C_{p_i}$.
The \emph{network congestion} is $C = C_{E}$,
which is the maximum edge-congestion over all edges in $E$.

We continue with definitions of exponential functions on congestion.
Consider a routing $\p$.
For any edge $e$, we will denote $\C_e = 2^{C_e}$.
For any set of edges $A \subseteq E$, we will denote $\C_{A} = \sum_{e \in A} \C_e$.
For any path $q$, we will denote $\C_q = \C_{E(q)}$.
For any path $p_i \in \p$ we will denote $\C_{i} = \C_{p_i}$.
We denote the length (number of edges) of any path $q$ as $|q|$.
Whenever necessary we will append $(\p)$ in the above definitions
to signify the dependance on routing $\p$.
For example, we will write $C(\p)$ instead of $C$.

\subsection{Routing Games}
A routing game in graph $G$ is a tuple
$\R = (G,\N,\paths)$, where
$\N=\{1,2,\ldots,N\}$
is the set of players such
that each player $i$ corresponds to a packet $\pi_i$
with source $u_i$ and destination $v_i$,
and $\paths$ are the strategies of the players.
We will use the notation $\pi_i$ to denote player $i$
and its respective packet.
In the set $\paths = \bigcup_{i \in \N} \paths_i$
the subset $\paths_i$ denotes the {\em strategy set} of player $\pi_i$
which a collection of
available paths in $G$ for player $\pi_i$ from $u_i$ to $v_i$.
Any path $p \in \paths_i$
is a {\em pure strategy} available to player $\pi_i$.
A {\em pure strategy profile} is any routing $\p=[p_1,p_2,\cdots,p_N]$,
where $p_i \in \paths_i$.
The {\em longest path length} in $\paths$
is denoted $L(\paths) =\max_{p\in \paths}|p|$.
(When the context is clear we will simply write $L$).

For game $R$ and routing $\p$,
the \emph{social cost} (or {\em global cost})
is a function of routing $\p$, and it is
denoted $SC(\p)$.
The \emph{player or local cost} is also a function on $\p$
denoted $pc_i(\p)$.
We use the standard notation
$\p_{-i}$ to refer to the collection of paths
$\{p_1,\cdots,p_{i-1},p_{i+1},\cdots,p_N\}$, and
$(p_i;\p_{-i})$ as an alternative notation for $\p$ which
emphasizes the dependence on $p_i$.
Player $\pi_i$
is \emph{locally optimal} (or {\em stable}) in routing $\p$ if
$pc_i(\p) \leq pc_i(p_i';\p_{-i})$ for
all paths $p_i'\in\paths_i$.
A greedy move by a player $\pi_i$ is any change of its path from $p_i$ to $p'_i$
which improves the player's cost,
that is, $pc_i(\p) > pc_i(p'_i;\p_{-i})$.
{\em Best response dynamics} are sequences of greedy moves by players.

A routing $\p$ is in a Nash Equilibrium
(we say $\p$ is a \emph{Nash-routing})
if every player is locally optimal.
Nash-routings quantify the notion of
a stable selfish outcome.
In the games that we study there could exist multiple Nash-routings.
A routing $\p^*$ is an optimal pure strategy profile
if it has minimum
attainable social cost: for any other pure strategy profile
$\p$, $SC(\p^*)\le SC(\p)$.

We quantify the quality of the Nash-routings with the
\emph{price of anarchy} ($PoA$)
(sometimes referred to as the coordination ratio)
and the
\emph{price of stability} ($PoS$).
Let $\bf P$ denote the set of distinct
Nash-routings, and let
$SC^*$ denote the
social cost of an optimal routing $\p^*$.
Then,
\begin{equation*}
PoA
=\sup\limits_{\p\in ~{\bf P}} \frac{SC(\p)}{SC^*},
\qquad
PoS
= \inf\limits_{\p\in ~{\bf P}} \frac{SC(\p)}{SC^*}.
\end{equation*} 
\section{Exponential Bottleneck Games and their Stability}
\label{section:stability}

Let $\R = (G,\N,\paths)$
be a routing game
such that for any routing $\p$
the social cost function is $SC = C$,
and the player cost function is $pc_i = \C_i$.
We refer to such routing games as {\em exponential bottleneck games}.

We show that exponential games have always Nash-routings.
We also show that there are instances of exponential games
that have multiple Nash-routings.
The existence of Nash routings relies on finding
an appropriate potential function that provides an ordering of the routings.
Given an arbitrary initial state
a greedy move of a player can only
give a new routing with smaller order.
Thus, best response dynamics (repeated greedy moves)
converge to a routing where no player can improve further,
namely, they converge to a Nash-routing.
The potential function that we will use is:
$f(\p) = \C_{E}(\p)$.
We show that any greedy move gives a new routing with lower potential.

\begin{lemma}
\label{theorem:greedy}
If in routing $\p$ a player $\pi_i$ performs a greedy move,
then the resulting routing $\p'$ has $\C_E(\p) > \C_E(\p')$.
\end{lemma}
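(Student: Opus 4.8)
The plan is to show that the change in the potential function $f(\p) = \C_E(\p)$ caused by player $\pi_i$'s greedy move is exactly (up to sign) the change in that player's cost $\C_i$, so that an improving move for the player is automatically a decreasing move for the potential. This is the standard Rosenthal-style potential argument, but adapted to the fact that here the potential is literally the sum of the player costs' summands over edges, which makes the bookkeeping cleaner than in the general weighted case.

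First I would fix notation: let $\p = (p_i; \p_{-i})$ and $\p' = (p_i'; \p_{-i})$, where $p_i'$ is the new path and the greedy condition gives $\C_i(\p) > \C_i(\p')$, i.e. $\sum_{e \in E(p_i)} 2^{C_e(\p)} > \sum_{e \in E(p_i')} 2^{C_e(\p')}$. Next I would split the edge set $E$ into three parts: edges in neither $E(p_i)$ nor $E(p_i')$, edges in the symmetric difference, and edges in the intersection $E(p_i) \cap E(p_i')$. For edges outside $E(p_i) \cup E(p_i')$, the congestion is unchanged by the move, so those terms cancel in $\C_E(\p) - \C_E(\p')$. For edges in $E(p_i) \cap E(p_i')$, the congestion is also unchanged (player $\pi_i$ uses them both before and after), so those terms cancel too. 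The key step is then to account for the symmetric-difference edges: on each $e \in E(p_i) \setminus E(p_i')$ the congestion drops by exactly $1$ (from $C_e(\p)$ to $C_e(\p) - 1$), and on each $e \in E(p_i') \setminus E(p_i)$ it rises by exactly $1$.

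Carrying out the arithmetic, I would write
\begin{equation*}
\C_E(\p) - \C_E(\p') = \sum_{e \in E(p_i) \setminus E(p_i')} \bigl(2^{C_e(\p)} - 2^{C_e(\p)-1}\bigr) - \sum_{e \in E(p_i') \setminus E(p_i)} \bigl(2^{C_e(\p')} - 2^{C_e(\p')-1}\bigr).
\end{equation*}
Since $2^{k} - 2^{k-1} = 2^{k-1}$, and since adding back the cancelled intersection terms $2^{C_e}$ (equal on both sides) lets me re-complete each sum into a sum over all of $E(p_i)$ resp. $E(p_i')$, this telescopes to something proportional to $\C_i(\p) - \C_i(\p')$; a clean way is to observe that $\C_E(\p) - \C_E(\p') = \tfrac{1}{2}\bigl(\C_i(\p) - \C_i(\p')\bigr)$ once one notes $2^{C_e(\p)} = \tfrac{1}{2}\cdot 2^{C_e(\p)} + 2^{C_e(\p)-1}$ on the leaving edges and symmetrically on the entering edges, with the half-terms reassembling the player costs and the remaining halves matching the unchanged-edge contributions. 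Either way, positivity of $\C_i(\p) - \C_i(\p')$ (the greedy hypothesis) forces $\C_E(\p) - \C_E(\p') > 0$, which is the claim.

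The main obstacle — really the only thing requiring care — is the edge-classification bookkeeping: one must be careful that common edges of $p_i$ and $p_i'$ genuinely keep the same congestion (so that a naive "congestion on $p_i$ goes down by one everywhere" is wrong and would break the argument), and that no edge is double-counted when $p_i$ or $p_i'$ traverses it more than once; restricting attention to edge-simple paths, or treating $E(p_i)$ as a multiset consistently on both sides, handles this. Once the symmetric difference is isolated correctly, the exponential identity $2^k - 2^{k-1} = 2^{k-1}$ does all the work and the inequality follows immediately.
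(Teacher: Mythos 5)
Your proposal is correct and follows essentially the same route as the paper: both decompose $E$ into the unchanged edges, the edges left ($A = E(p_i)\setminus E(p_i')$), and the edges joined ($B = E(p_i')\setminus E(p_i)$), use the fact that congestion changes by exactly one on the symmetric difference so the exponential terms double or halve, and then invoke the greedy hypothesis (after cancelling the intersection terms) to get $\C_A(\p) > \C_B(\p')$. Your exact identity $\C_E(\p) - \C_E(\p') = \tfrac{1}{2}(\C_i(\p) - \C_i(\p'))$ is the same computation as the paper's $\C_E(\p) = \C_E(\p') + \C_A(\p') - \C_B(\p')/2$, just packaged as an exact-potential statement.
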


\begin{proof}
Suppose that player $\pi_i$ has path $p_i \in \p$ and switches to path $p'_i \in \p'$.
Then, $\C_{p_i}(\p) > \C_{p'_i}(\p')$.
Let $A = E(p_i) \setminus E(p'_i)$ and $B = E(p'_i) \setminus E(p_i)$.
It has to be that
$\C_{A}(\p) > \C_{B}(\p')$
since $\pi_i$'s cost decreases.
Further, $\C_{B}(\p') = 2 \C_{B}(\p)$ and $\C_{A}(\p) = 2 \C_{A}(\p')$,
since the presence or absence of player's $\pi$ path in the edges $A$ and $B$ alters their
total cost by a factor of 2.
Let $H = E \setminus\{A \cup B\}$.
We have that $\C_{H}(\p) = \C_{H}(\p')$, since $\pi_i$ does not affect those edges.
Since $E = H \cup A \cup B$ and $H$, $A$, $B$ are disjoint,
we have that
\begin{eqnarray*}
\C_{E}(\p)
& = & \C_{H}(\p) + \C_A(\p) + \C_B(\p) \\
& = & \C_{H}(\p') + 2 \C_A(\p') + \frac{\C_B(\p')} {2} \\
& = & \C_{H}(\p') + \C_A(\p') + \C_B(\p') \\
&&    + \left ( \C_A(\p') - \frac{\C_B(\p')} {2} \right ) \\
& = & \C_E(\p') + \left ( \C_A(\p') - \frac{\C_B(\p')} {2} \right ).
\end{eqnarray*}
Since $\C_{A}(\p) = 2 \C_{A}(\p')$ and $\C_{A}(\p) > \C_{B}(\p')$,
we have that $2 \C_A(\p') > \C_B(\p')$,
or equivalently
$$
\C_A(\p') - \frac{\C_B(\p')} {2} > 0.
$$
Therefore, $\C_{E}(\p) > \C_E(\p')$, as needed.
\end{proof}

Since the result of the potential function cannot be smaller than zero,
Lemma \ref{theorem:greedy} implies that best response dynamics converge to Nash-routings.
Thus, we have:

\begin{theorem}
\label{theorem:stability}
Every exponential game instance $\R = (G,\N,\paths)$ has a Nash-routing.
\end{theorem}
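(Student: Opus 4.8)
The plan is to deduce Theorem~\ref{theorem:stability} directly from Lemma~\ref{theorem:greedy} via a standard potential-function argument. First I would observe that the potential function $f(\p) = \C_E(\p) = \sum_{e \in E} 2^{C_e(\p)}$ takes values in a discrete, well-structured set: each edge-congestion $C_e(\p)$ is a non-negative integer bounded above by $N$ (the total number of players), so $f(\p)$ is a positive integer bounded above by $|E| \cdot 2^N$. In particular, as $\p$ ranges over all pure strategy profiles, $f$ takes only finitely many values. This is the key structural fact that turns the strict-decrease property into a termination argument.

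Next I would invoke Lemma~\ref{theorem:greedy}: whenever a player $\pi_i$ performs a greedy move, transforming routing $\p$ into $\p'$, we have $\C_E(\p) > \C_E(\p')$, i.e.\ the potential strictly decreases. Starting from an arbitrary initial routing $\p^{(0)}$ (which exists since every $\paths_i$ is non-empty by assumption), consider any sequence of best-response dynamics $\p^{(0)}, \p^{(1)}, \p^{(2)}, \ldots$, where each $\p^{(t+1)}$ is obtained from $\p^{(t)}$ by a single greedy move. By Lemma~\ref{theorem:greedy} the values $f(\p^{(0)}) > f(\p^{(1)}) > f(\p^{(2)}) > \cdots$ form a strictly decreasing sequence of positive integers, hence this sequence must be finite: it terminates after at most $f(\p^{(0)}) \le |E| \cdot 2^N$ steps. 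Let $\p$ be the final routing reached. By definition of termination, no player has an available greedy move from $\p$, which means every player $\pi_i$ satisfies $pc_i(\p) \le pc_i(p_i'; \p_{-i})$ for all $p_i' \in \paths_i$; that is, $\p$ is a Nash-routing.

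There is essentially no hard part here — the content is entirely in Lemma~\ref{theorem:greedy}, which has already been proved. The only point requiring a moment of care is the justification that best-response dynamics cannot continue forever, and this follows from the discreteness of the range of $f$ (it suffices that $f$ is integer-valued and bounded below, or even just that it takes finitely many values and strictly decreases). I would state this cleanly and note in passing that the argument also yields an explicit, if crude, bound on the convergence time of best-response dynamics. One could alternatively phrase the whole thing as: $f$ is an exact (or generalized ordinal) potential for the game, so the game is a potential game and therefore possesses a pure Nash equilibrium; but since Lemma~\ref{theorem:greedy} only asserts strict decrease (not exact matching of cost differences), the generalized-ordinal-potential formulation is the appropriate one, and the direct termination argument above is the most transparent route.
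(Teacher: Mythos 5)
Your proposal is correct and follows essentially the same route as the paper: the paper likewise derives the theorem from Lemma~\ref{theorem:greedy} by noting that the potential $f(\p)=\C_E(\p)$ is bounded below, so best-response dynamics must terminate at a Nash-routing. Your added details (integrality of $f$, the explicit $|E|\cdot 2^N$ bound, and the generalized-ordinal-potential framing) merely make explicit what the paper leaves implicit.
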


\begin{figure}[ht]
\begin{center}
\resizebox{2.5in}{!}{\input{}}
\end{center}
\caption{An exponential game instance with multiple Nash-routings}
\label{fig:multiple-nash}
\end{figure}

We continue to show that there are exponential games with multiple Nash-routings.
Consider the example of Figure \ref{fig:multiple-nash}.
There are three players $\pi_1, \pi_2, \pi_3$ with respective sources $u_1, u_2, u_3$
and destinations $v_1, v_2, v_3$.
The strategy set of each player are all feasible paths from their source to destination.
In the left part of Figure \ref{fig:multiple-nash} is a Nash-routing $\p = [p_1, p_2, p_3]$ with
social cost $SC(\p) = 2$ and respective player costs $pc_1(\p) = 4$, $pc_2(\p) = 8$, and $pc_3(\p)=6$.
On the right part of the same figure is another Nash-routing $\p' = [p'_1, p'_2, p'_3]$ with
social cost $SC(\p') = 1$ and respective player costs $pc_1(\p') = 2$, $pc_2(\p')=6$, and  $pc_3(\p') = 6$.
Thus, we can make the following observation:

\begin{observation}
There exist exponential game instances with multiple Nash-routings.
\end{observation}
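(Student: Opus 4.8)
The plan is to prove the Observation by explicit construction: exhibit a single exponential game instance $\R=(G,\N,\paths)$ together with two routings $\p$ and $\p'$ that are both Nash-routings but have different social cost. Since $SC(\p)\neq SC(\p')$ the two routings are automatically distinct, so once each is verified to be locally optimal for every player the claim is immediate. Concretely I would take three players $\pi_1,\pi_2,\pi_3$ with source--destination pairs $(u_i,v_i)$ and let each strategy set $\paths_i$ be the set of all $u_i$--$v_i$ paths in $G$, then design $G$ so that it admits a ``spread out'' routing $\p'$ in which every edge has congestion $1$, and also a ``crowded'' routing $\p$ in which the bottleneck congestion is $2$, as in Figure~\ref{fig:multiple-nash}.

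The two verifications go as follows. For $\p'$, every edge has $C_e=1$, so $\C_i(\p')=2|p_i|$; a unilateral deviation of $\pi_i$ to any $p_i'\in\paths_i$ can only keep the length the same or longer, or raise some edge congestion to $2$, so $\C_i(p_i';\p'_{-i})\geq \C_i(\p')$ whenever $p_i'$ is not shorter than $p_i$ and shares no edge with the others. Hence the construction must ensure that no player has a strictly shorter alternative path that avoids the edges used by the remaining two players — this is the one real design constraint for the optimal side. For $\p$, I would compute $\C_i(\p)$ for each player (the target values being $pc_1,pc_2,pc_3 = 4,8,6$) and then check, for every alternative path $p_i'$, that $\C_i(p_i';\p_{-i})\geq \C_i(\p)$. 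The useful principle here is the one already used in Lemma~\ref{theorem:greedy}: when a single player moves, she changes the congestion of only her own added and dropped edges, and only by a factor of $2$; so the edges that carry the bottleneck congestion under $\p$ (due to the other two players' paths) stay crowded, and the topology just needs to be arranged so that every alternative path for $\pi_i$ still traverses such an edge.

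The main obstacle is purely the combinatorial bookkeeping of that second check: the instance must be chosen so that the globally worse routing $\p$ is genuinely ``locked in,'' i.e.\ the coordinated move of all three players from $\p$ to $\p'$ halves the social cost, yet no single deviation is profitable. Matching the exact player costs ($4,8,6$ in $\p$ and $2,6,6$ in $\p'$) is then just a matter of counting path lengths and overlaps and requires no further idea. Once both routings pass the local-optimality test, we conclude that $\p$ and $\p'$ are distinct Nash-routings of the same exponential game, which proves the Observation.
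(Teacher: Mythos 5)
Your proposal takes essentially the same approach as the paper: the paper's justification of this Observation is exactly the explicit three-player construction of Figure~\ref{fig:multiple-nash}, presenting a crowded Nash-routing $\p$ with social cost $2$ and player costs $4,8,6$ alongside a spread-out Nash-routing $\p'$ with social cost $1$ and player costs $2,6,6$. Your outline correctly identifies the two local-optimality checks that make the example work, and the remaining step of fixing a concrete graph is exactly what the paper supplies via its figure.
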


\newcommand{\no}{\noindent}
\newcommand{\bea}{\begin{eqnarray}}
\newcommand{\eea}{\end{eqnarray}}
\newcommand{\beq}{\begin{equation}}
\newcommand{\eeq}{\end{equation}}
\newcommand{\beqs}{\begin{equation*}}
\newcommand{\eeqs}{\end{equation*}}
\newcommand{\beas}{\begin{eqnarray*}}
\newcommand{\eeas}{\end{eqnarray*}}
\newcommand{\ep}{\end{proof}}
\newcommand{\bp}{\begin{proof}}

\def\fct#1{{\mathop{\rm #1}}}   
\def\abs{\fct{abs}}             
\def\sign{\fct{sign}}           
\def\re{\fct{Re\,}}             
\def\im{\fct{Im\,}}             
\def\argmax{\mathop{\rm argmax}}
\def\argmin{\mathop{\rm argmin}}

\def\h{\hat{h}}
\def\hG{\hat{G}}
\def\hg{\hat{G}}
\def\hC{\hat{C}}
\def\hc{\hat{C}}
\def\hl{\frac{1}{2} \log_2 }
\def\PBS0{\frac{\hat{h}_b P_T}{M} }
\def\PG{P^g  }
\def\PB{P^b  }
\def\PGS{P^{g^*}  }
\def\PBS{P^{b^*}  }
\def\RG{R^g  }
\def\RB{R^b  }
\def\a{\alpha }
\def\DH{\Delta \! H }

\def\-{  \!-\! }
\def\+{  \!+\! }
\def\argmax{\operatornamewithlimits{arg\,max}}
\def\mbf{\mathbf}
\def\hsA{\hspace*{0.3in}}
\def\hsB{\hspace{0.5cm}}
\def\f{\frac{1}{2}}
\def\o{\overline}
\def\n{\nonumber}
\def\b{\bar}
\def\hP{\hat{P}}
\def\mb{\mbox}

\section{Price of Anarchy}
\label{section:anarchy}
We bound the price of anarchy in exponential bottleneck games.
Consider an exponential bottleneck routing game $\R = (G,\N,\paths)$.
Let $\p = [p_1, \ldots, p_N]$ be an arbitrary Nash-routing with social cost $\hC$;
from Theorem \ref{theorem:stability} we know that $\p$ exists.
Let $\p^* = [p^*_1, \ldots, p^*_N]$ represent the routing with optimal social cost $C^*$.
Let $L$ be the maximum path length in the players strategy sets
and $L^* \leq L$ be the longest path in $\p^*$.
Denote $l^* = \lg L^*$.

We will obtain an upper bound on the price of anarchy
$PoA = C/C^*$ by finding a lower
bound on the number of players as well as the number of edges in the $\p$.
The proof relies on the notion of self-sufficient sets:

\begin{definition}[Self-sufficient set]
{\it Consider an arbitrary set of players $S$ in Nash-routing $\p$ in game $\R$.
We label the equilibrium of $S$ as {\em self-sufficient} if,
after removing the paths of all players
$\notin S$ from $\p$, for every $\pi_i \in S$, the cost $\C_{p_i^*}$ remains at least $pc_i(\p)$.
Thus $\pi_i$ cannot switch to path $p_i^*$ only because of other players in $S$.
}
\end{definition}

If a set $S$ is not self-sufficient,
then additional players $S'$ must be present to guarantee the Nash-routing.
Thus, we define the notion of support sets:

\begin{definition}[Support set]
{\it
If $S$ is not a self-sufficient set,
then there is a set of players $S'$, where $S \cap S' = \emptyset$,
such that the paths of the players in $S' \cup S$
guarantee that $\C_{p_i^*}$ remains at least $pc_i(\p)$
even if all the other players $\notin S' \cup S$
are removed from the game.
}
\end{definition}

The players in $S'$ may not be self-sufficient either.
This process is repeated until a self-sufficient set is found.
Our goal is to find a lower bound on a self-sufficient set players.
We start with a small set of players based on $\hat{C}$
and the optimal congestion value $C^*$, prove they are not
self-sufficient and consider a sequence of {\it expansions} that will
eventually lead to a self-sufficient set.  We find the minimum
number of these expansions to terminate the process and thus find
the minimum number of players (and edges) needed to support a maximum
equilibrium congestion of $\hat{C}$.  For a given graph $G$ and
players/edges this gives us an upper bound on $\hat{C}$ relative
to $C^*$.

Initially assume $C^*=1$, i.e every player in the optimally congested
network has a unique optimal path to its destination of length at
most $L^*$.  For the game $\hg$ we will consider sets of players
in stages, depending on their costs in $\hg$.  Let $S^{(i)}$ denote
the set of players in stage $i$, $1 \leq i \leq \hc$ with player
costs $\tilde{C}: 2^{\hc \- i}  \+ 2 \leq \tilde{C} \leq 2^{\hc \-
i \+ 1}$.  Consider an arbitrary player $\pi$ in stage $i$.  We let
$P^*$ denote its optimal path and $\Phi(P^*)$ the minimum cost of
path $P^*$ in $\hg$. Since $\Pi$ is in equilibrium, we must have
$\Phi(P^*) > 2^{\hc \- i \- 1}$.

We formally define {\it expansion chains} as follows: In stage $i$,
$1 \leq i \leq \hc \- 1$, let $A^{(i)}$ denote the set of all players
occupying exactly one edge of congestion $\hc \- i \+ 1$,  let
$B^{(i)}$ denote the set of all players whose {\it maximum} edge
congestion $C'$ satisfies $\hc \- i \geq C' >  \hc \- i \- l^* \-
1$ and finally let $D^{(i)} = S^{(i)} \- A^{(i)} \-  B^{(i)} $.
For $i>1$, a level $i$ expansion chain consists of a single chain
of nodes $r \rightarrow X_{i+1}(r) \rightarrow X_{i+2}(r)\rightarrow \ldots$,
where the root node $r$ represents the players
of $\{ B^{(i)}, D^{(i)} \}$.  Thus there are two possible expansion
chains rooted at level $i$, except for level 1, where $A^{(1)}$ can
also be the root node for a third expansion chain.  The rest of the
chain consists of a sequence of nodes such that node $X_{i+k}(r)$
represents the support set of players of node $X_{i+k-1}(r)$.




We first show below, a sufficient condition on $\hC$ for expansion
chains to exist at any stage. For technical reasons, we will use
$l^*_1 = \log_2 (L^* \- 1)$.

\begin{lemma}
{\it
Given a non-empty player set $X^{(i)} \in \{ A^{(i)}, B^{(i)},
D^{(i)} \}$, either there exists an expansion chain rooted at $X^{(i)}$
or the players of $X^{(i)}$ are on the expansion chain of other
players for all stages $i: 1 \leq i \leq \hC \- l_1^* \- 11$.
\label{l^*lemma}
}
\end{lemma}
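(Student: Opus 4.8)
The plan is to fix a non-empty $X^{(i)}\in\{A^{(i)},B^{(i)},D^{(i)}\}$ at a stage $i$ with $1\le i\le\hC-l_1^*-11$ and establish the dichotomy directly. If the players of $X^{(i)}$ already occur as a node of some expansion chain built while processing an earlier stage, the second alternative of the lemma holds and there is nothing more to do; so assume $X^{(i)}$ has not yet been used, and show it can serve as the root $r$ of its own chain $r\to X_{i+1}(r)\to X_{i+2}(r)\to\cdots$. The core is a single claim, applied repeatedly down the chain: the support set of any non-empty node whose players sit in an admissible stage is again non-empty, so the chain can always be prolonged by one node until a self-sufficient node is reached, at which point it legitimately terminates.

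For that claim I would start from the equilibrium inequality recorded just before the lemma: a player $\pi$ at stage $j$ has an optimal path $P^*$ with $\Phi(P^*)>2^{\hC-j-1}$ in the Nash-routing $\p$. This is exactly the fact that $\pi$ gains nothing by moving to $(P^*;\p_{-\pi})$, combined with the stage lower bound $pc_\pi(\p)\ge 2^{\hC-j}+2$ and the observation from the proof of Lemma \ref{theorem:greedy} that relocating one player rescales the cost of any path by at most a factor $2$. Since $|P^*|\le L^*$ and $\Phi(P^*)=\sum_{e\in P^*}2^{C_e}$, a pigeonhole step produces an edge $e^*\in P^*$ with $C_{e^*}>\hC-j-1-l_1^*$; the appearance of $l_1^*=\log_2(L^*-1)$ rather than $l^*$ comes from discarding the one edge of $P^*$ that $\pi$ may itself occupy, so that the surviving congestion on $e^*$ is genuinely supplied by players other than $\pi$. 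The players routed through $e^*$ together with the players on $\pi$'s own path in $\p$ are precisely the ones that must stay in the game for $\pi$'s equilibrium constraint to persist, hence form (a subset of) the support set of $\pi$; taking the union over $\pi\in X^{(j)}$ gives the successor node $X_{j+1}(r)$. When $j\le\hC-l_1^*-11$ the bound forces $C_{e^*}>10$, which both makes $X_{j+1}(r)$ non-empty and keeps its players' costs ($\ge 2^{C_{e^*}}$) well inside the admissible range of stages; the slack $11$ is what absorbs the various $\pm 1$ offsets in the definitions of $A^{(i)},B^{(i)},D^{(i)}$ and of the stage windows and keeps this congestion comfortably positive in all the sub-cases. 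Once a node drifts within $l_1^*+11$ of $\hC$ we either have already hit a self-sufficient node or simply stop, which is exactly why the hypothesis restricts the root $i$.

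It then remains to verify that this generic step specializes correctly to each of the three types of root. For the root $A^{(1)}$, permitted only at level $1$, the single edge of congestion $\hC-i+1$ exhibits the support players outright. For $B^{(i)}$ one additionally uses the bound $\hC-i\ge C'>\hC-i-l^*-1$ on the player's maximum edge congestion to keep the chosen $e^*$ inside the intended band of congestions. For $D^{(i)}$, whose players have cost $\ge 2^{\hC-i}+2$ smeared over many edges of congestion at most $\hC-i-l^*-1$ — which is also why $D^{(i)}$ can be non-empty only when the Nash path-length bound $L$ is appreciably larger than the optimal bound $L^*$ — the same pigeonhole edge still has congestion above $\hC-i-1-l_1^*$, so the generic conclusion applies unchanged. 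Since $A^{(i)}$ with $i>1$ is not a permitted root, one finally checks that such a set is always subsumed as a node of a chain constructed at an earlier stage, which is where the stage-by-stage ordering of the construction is used. In every case we either reach a self-sufficient node — the chain ends — or obtain a non-empty admissible successor, and finiteness of the player set guarantees termination; should a successor produced along the way equal the set $X^{(i)}$ we began with, we are once more in the second alternative of the lemma.

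The real obstacle is not the pigeonhole, which is routine, but the bookkeeping that makes the constant $11$ (and the choice of $l_1^*$ over $l^*$) simultaneously correct in all three cases: one has to track the factor-$2$ gaps between the cost of a player, the cost of its optimal path, and the quantity the support set is required to preserve; the $\pm 1$ shifts in the stage windows $[2^{\hC-j}+2,\,2^{\hC-j+1}]$ and in the congestion thresholds defining $A^{(i)}$ and $B^{(i)}$; and the distinction between $L$ and $L^*$. A secondary point of care is to organize the stage-by-stage construction in the right order, so that the ``already on another chain'' alternative is genuinely available whenever a fresh chain cannot or need not be rooted — that is, so the family of chains built so far really does cover every node it should — which is an ordering issue rather than a new idea.
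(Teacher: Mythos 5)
Your proposal has a genuine gap at its core step. The lemma is really the assertion that $X^{(i)}$ is \emph{not self-sufficient}: by the paper's definition, this means that if every player outside $X^{(i)}$ is deleted, the cost of some player's optimal path drops below its equilibrium cost, so outside players are forced to exist. Your pigeonhole argument shows that each optimal path $P^*$ with $\Phi(P^*)>2^{\hat{C}-i-1}$ and $|P^*|\le L^*$ must contain an edge $e^*$ of congestion exceeding roughly $\hat{C}-i-1-l_1^*$ --- true and relevant --- but it never rules out that the congestion on $e^*$ is supplied entirely by other members of $X^{(i)}$ itself. You only discard the one edge that $\pi$ may occupy, which excludes $\pi$ but not its fellow members of $X^{(i)}$. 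For $D^{(i)}$ this is harmless, since $D^{(i)}$-players by definition occupy only edges of congestion at most $\hat{C}-i-l_1^*-2$, below the pigeonhole threshold (this is essentially the paper's one-line argument for $D^{(i)}$). But for $B^{(i)}$ the members have maximum edge congestion as high as $\hat{C}-i$, well above the threshold, so the edge $e^*$ could a priori consist entirely of $B^{(i)}$-players; and for $A^{(i)}$ the shared edge of congestion $\hat{C}-i+1$ could itself serve as the length-one optimal path of other $A^{(i)}$-players. Your proposal contains no argument that excludes these possibilities.

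This is exactly where the paper does its real work, and it is a global counting argument over the whole set rather than a per-player pigeonhole: for $A^{(i)}$ it compares the number of edges available inside the set, $|A^{(i)}|/(\hat{C}-i+1)$, against the $|A^{(i)}|$ optimal paths that must be covered; for $B^{(i)}$ it assumes self-sufficiency, budgets the total player cost against the total cost that the internal edges $B^{(i)}_j$ can contribute to all optimal paths, and derives the contradiction $\sum_j \frac{|B^{(i)}_j|}{2^{j+1}}\bigl(8+j-(\hat{C}-i)\bigr)\ge 0$, which fails precisely when $\hat{C}-i>l_1^*+10$. The constant $11$ is not bookkeeping of $\pm1$ offsets absorbed by slack, as you suggest; it falls out of this accounting inequality. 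Without some version of these set-level counting arguments --- in particular for $B^{(i)}$ --- your chain-extension claim (``the support set of any non-empty admissible node is again non-empty,'' meaning non-empty \emph{outside} the node) is unproved, and the whole construction does not get off the ground.
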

\begin{IEEEproof}
To prove the existence of expansion chains at any stage $i$, we
need to show that the set of players $X^{(i)}$ is not
self-sufficient. Consider each of the possible elements
of $X^{(i)}$ separately. First consider the set $D^{(i)}$. Clearly,
$D^{(i)}$'s equilibrium is not self-sufficient since the maximum
congestion experienced by players in $D^{(i)}$ is $\hC \- i \- l_1^*
\- 2$ and thus the maximum cost of an optimal path composed exclusively
of edges from $D^{(i)}$ is $ (L^*\-1\+1) \cdot 2^{\hC -i - l_1^* - 2} \ \
= \ \ 2^{\hC - i  - 2} + 2^{\hC - i  - l_1^* -2} $, which is
strictly less than the minimum required cost of an optimal path
$\Phi(P^*) \+ 1$.

Next consider the set $B^{(i)}$. Assume for purposes of contradiction
that $B^{(i)}$ is self-sufficient, i.e there are a
sufficient number of edges composed exclusively of players in
$B^{(i)}$ that are also on all the optimal paths of $B^{(i)}$ and
each optimal path has cost at least $\Phi(P^*) \+ 1$.  Let $B^{(i)}_j$
denote the edges of congestion $\hc \- i \- j$ composed exclusively
of players in $B^{(i)}$, where $ 0 \leq j \leq \hC \- i \-1 $.  Note
that a single player in $B^{(i)}$ may have several edges across
different $B^{(i)}_j$'s.  Each edge of $B^{(i)}_j$ contributes
$2^{\hc \- i \- j}$ to the total cost of each of the $\hc \- i \-
j$ players on the edge.  Since the total cost of each player in
$B^{(i)}$ is bounded by $2^{\hc \- i \+ 1}$, we must have

\bea
\; \; \sum_{j=0}^{\hc \- i \- 1} |B^{(i)}_j| (\hc \- i \-
j) 2^{\hc \- i \- j}  &\leq &|B^{(i)}| 2^{\hc \- i \+ 1}  \nonumber
\\
\equiv \sum_{j=0}^{l_1^* + 2} |B^{(i)}_j| \left(
\dfrac{\hc \- i \- j}{2^{j \+ 1}} \right)  &\leq &|B^{(i)}|
\label{lowerbound1}
\eea

Since $B^{(i)}$ is in equilibrium, each of the $|B^{(i)}|$ optimal
paths has cost $> \Phi(P^*)$. For $j \geq 1$, each edge $e \in B^{(i)}_j$ on an
optimal path $P^*_{opt}$ contributes $\Phi(P^*)/2^{j-1}$ towards the
cost of this path. (Each edge in $B^{(i)}_0$ contributes
$\Phi(P^*)$).  Now using the fact that $B^{(i)}$'s equilibrium is
self-contained, we must have

\bea
\sum_{e \in B^{(i)}_0} \Phi(P^*)
 +
\sum_{j=1}^{\hc \- i \- 1}
\sum_{e \in B^{(i)}_j} \dfrac{\Phi(P^*)}{2^{j-1}}
&>
& \sum_{P^*_{opt}} \Phi(P^*) \nonumber \\
\ \ \equiv
|B^{(i)}_0| + \sum_{j=1}^{\hc \- i \- 1} \dfrac{|B^{(i)}_j|}{2^{j-1}}
&>
&|B^{(i)}|
\label{upperbound1}
\eea

We note the following: edges of congestion $\leq \hC \- i \- l_1^* \- 3$ must
account for less than half the cost of any optimal path on which they are
present. The maximum contribution of such edges over $L^* \- 1$ edges of the
optimal path is $\Phi(P^*)/2$, implying that there must be one edge of higher
congestion ($\geq \hC \- i \- l_1^* \- 2$) that contributes more than half of
the required total cost $\geq \Phi(P^*)\+ 1$. Thus we must have
\beq
\sum_{j= l^* + 3 }^{\hc \- i \- 1} \dfrac{|B^{(i)}_j|}{2^{j-1}}
<
|B^{(i)}_0| +
\sum_{j=1}^{l_1^* + 2} \dfrac{|B^{(i)}_j|}{2^{j-1}}
\label{upperbound2}
\eeq
\no and therefore Eq.~\ref{upperbound1} becomes
\beq
|B^{(i)}_0| +
\sum_{j=1}^{l_1^* \+ 2} \dfrac{|B^{(i)}_j|}{2^{j-1}}
>
\dfrac{|B^{(i)}|}{2}
\label{upperbound3}
\eeq

Comparing Eq.~\ref{upperbound3} with Eq.~\ref{lowerbound1}, we get
\[
2|B^{(i)}_0| +
\sum_{j=1}^{l_1^* + 2} \dfrac{|B^{(i)}_j|}{2^{j-2}}
\geq
\sum_{j=0}^{l_1^* + 2} \dfrac{|B^{(i)}_j|}{2^{j+1}} \big(\hC \- i \- j \big)
\]
or simplifying
\beq
\sum_{j=0}^{l_1^* + 2} \dfrac{|B^{(i)}_j|}{2^{j+1}}
\cdot
\big( 8 \+ j \- (\hC \- i) \big)
\geq  0
\label{finalbound}
\eeq
Since $|B^{(i)}_j| > 0$ for at least some $j: 1 \leq j \leq l_1^* \+
2$, Eq.~\ref{finalbound} is impossible for $ (\hC \- i) > l_1^* \+ 10$, which contradicts
the assumption that $B^{(i)}$ is self-sufficient.

Finally for the case of players from $A^{(i)}$, each subset of $\hC
\- i \+ 1$ players shares an edge. Thus the maximum number of optimal
edges available from within the set is $|A^{(i)}|/(\hC \- i \+ 1)$.
Since this is much less than the number of optimal paths $|A^{(i)}|$,
players in $A^{(i)}$ are also not self-sufficient.

Concluding, none of the player sets $\{ A^{(i)}, B^{(i)}, D^{(i)}\}$ are
self-sufficient and hence either these players are on the expansion chains of
some other players or there are expansion chains rooted at these players
in stage $i: 1 \leq i \leq \hC \- l^* \- 11$.
\end{IEEEproof}

The above lemma guarantees the existence of at least one expansion
chain rooted at stage $1$ when $\hC = O(l^*)$.  We now want to find
the minimum number of edges required to support the game with
equilibrium cost $2^{\hC}$. This corresponds to finding the smallest
expansion chain rooted at stage $1$.
By our definition, an expansion chain consists of new
players occupying the expansion edges of players on the previous
levels.  It would seem that chains should consist of type $B$ players
since they occupy multiple edges and thus fewer players are required.
However as the lemma below shows it is players of type $A$ that
minimize the expansion edges.



Consider an arbitrary player $\pi$ of type $B$ in $\hG$ occupying
edges $E= \{e_1,e_2, \ldots, e_k\}$  of non-increasing congestion
$c_1 \geq c_2 \geq \ldots c_k$ that are optimal edges (expansion
edges) of other players, where we assume maximum congestion $c_1
\geq 2$.  We want to answer the following question: Is there an
alternate equilibrium/game containing player(s) with the same total
equilibrium cost as $\pi$, but requiring fewer edges to support
this equilibrium cost.  Note that when comparing these two games,
the actual routing paths (i.e source-destinations) do not have to
be the same. All we need to show is the existence of an alternate
game (even with different source-destination pairs for the players)
that has the same equilibrium cost.

In particular, consider an alternate game $G'$ in which $\pi$ is
replaced by a set $P=\{\pi_1, \pi_2, \ldots, \pi_k\}$ of type
$A$ players occupying single edges of congestion $c_1, c_2, \ldots,
c_k$, where $\pi$ and the set $P$ are also in equilibrium in
their respective games. The equilibrium cost of $\pi$ and set $P$
is the same ($\sum_{j=1}^k 2^{c_j}$) as they are occupying edges
of the same congestion.  Since both $\pi$ in game $\hG$ and the set
of players $P$ are in equilibrium and occupying expansion edges of
other players in their respective games, $C^*=1$ implies they must
have their own expansion edges in their respective games.  Suppose
we can show that the number of expansion edges required by the $k$
players in $P$ is at most those required by the single player of
type $B$.  Since $\pi$ is an arbitrary type $B$ player, this argument
applied recursively implies that all expansion edges in the game
$\hG$ should be occupied by type $A$ players to minimize the total
number of expansion edges.  Thus we will have shown that any
equilibrium with cost $\hC$ can be supported with fewer total players
if they are of type $A$ than if they are of type $B$.  Let $\pi^*$
and $P^*$ denote the expansion edges of $\pi$ and the set $P$
respectively.

\begin{lemma}
{\it
$|P^*| \leq |\pi^*|$ for arbitrary players $\pi$ and set $P$ with
the same equilibrium cost.
\label{typeB=typeA}
}
\end{lemma}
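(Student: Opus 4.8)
The plan is to read off, from the two equilibrium conditions, how much $\C$-cost any self-sufficient support must place on $\pi$'s optimal path as opposed to on the optimal paths of the replacement players $\pi_1,\dots,\pi_k$, and then to \emph{build} an alternate game $G'$ realizing the type-$A$ configuration with no more expansion edges than $\pi$ uses.

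First I would record the equilibrium constraints as statements about optimal-path costs. Since $\pi$ is stable in $\hG$, rerouting it onto its optimal path $P^*$ cannot lower its cost; as such a reroute raises the congestion of each edge of $P^*$ by at most one, this forces $\C_{\pi^*}\ge\tfrac12\sum_{j=1}^{k}2^{c_j}$, where $\pi^*=E(P^*)$ and $\sum_{j=1}^{k}2^{c_j}$ is $\pi$'s equilibrium cost. Symmetrically, in $G'$ each $\pi_j$ is a type-$A$ player of equilibrium cost exactly $2^{c_j}$, so its stability forces $\C_{\pi_j^*}\ge 2^{c_j-1}$; and since $C^*=1$ holds in $G'$ as well, the optimal paths of $\pi_1,\dots,\pi_k$ are pairwise edge-disjoint, so $|P^*|=\sum_{j=1}^{k}|\pi_j^*|$.

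The construction then hinges on a budget identity: the combined cost the $\pi_j$'s need on their disjoint optimal paths is $\sum_{j=1}^{k}2^{c_j-1}=\tfrac12\sum_{j=1}^{k}2^{c_j}$, which is at most $\C_{\pi^*}$ — no more than the cost already sitting on $\pi$'s own optimal path. I would therefore build $G'$ by partitioning the edges of $\pi^*$ into $k$ blocks whose $\C$-costs dominate $2^{c_1-1}\ge\cdots\ge 2^{c_k-1}$ respectively (a greedy ``largest remaining edge into the currently neediest block'' assignment, legitimate because all edge costs are powers of two), taking block $j$ as the congestion-bearing portion of $\pi_j$'s optimal path and re-using on it the support players that $\pi$ already had there. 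The resulting $G'$ is again in equilibrium with $C^*=1$ and uses $|P^*|\le|\pi^*|$ expansion edges, which is the claim; applying this recursively along the expansion chains then shows any chain can be taken to consist solely of type-$A$ players.

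I expect the crux to be exactly the feasibility of that partition. A single very-high-congestion edge of $P^*$ can be charged to only one block (disjointness), so one must rule out a shortfall in the remaining blocks — and this is where ``$\pi$ is of type $B$'' is used: the players occupying $\pi$'s optimal-path edges lie at strictly later stages than $\pi$, so no edge of $\pi^*$ can carry congestion far above $\pi$'s own maximum $c_1$, which caps the budget any one edge absorbs and keeps $|\pi^*|$ large enough to seed all $k$ blocks. The remaining bookkeeping — discarding from each $\pi_j^*$ the low-congestion edges that contribute less than half of an optimal path's cost (as in the treatment of $B^{(i)}$), and completing each block to a genuine source-to-destination path of length at most $L^*$ — I expect to be routine.
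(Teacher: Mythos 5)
Your overall direction --- use the equilibrium inequality $\C_{\pi^*}\ge\tfrac12\sum_{j}2^{c_j}$ as a budget and partition the edges of $\pi^*$ into blocks serving the type-$A$ replacements $\pi_1,\dots,\pi_k$ --- is exactly what the paper does in its ``Case 1'' ($c_1^*<c_1$, where $c_1^*$ is the largest congestion on $\pi$'s optimal path), except that the paper organizes it as a strong induction on $|p_\pi|$, at each step splitting $\pi$ into only two sub-players and cutting $\pi^*$ at one or two prefix indices rather than doing a $k$-way greedy partition. The genuine gap is the other case, $c_1^*\ge c_1$, which you try to rule out and the paper does not: you claim that because $\pi$ is of type $B$, the players occupying $\pi$'s optimal-path edges ``lie at strictly later stages,'' so no edge of $\pi^*$ can carry congestion far above $c_1$. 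But the definition of $B^{(i)}$ constrains only the congestions of the edges that $\pi$ itself \emph{occupies}; it says nothing about the congestions along $\pi$'s alternative (optimal) path. Indeed a single very congested edge on $P^*$ is the cheapest way, in edge count, to keep $\pi$ in equilibrium, so this configuration is not only possible but is the adversarial one. When it occurs, $\pi^*$ may consist of a single edge of cost $2^{c_1^*}\ge\tfrac12\sum_j 2^{c_j}$; the budget inequality is satisfied, yet one edge cannot be partitioned into $k\ge 2$ nonempty blocks, and your construction of $G'$ collapses.

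The paper closes this case by a different counting argument: if $c_1^*\ge c_1$, the edge $e_1^*$ carries at least $c_1\ge 2$ players each of cost at least $2^{c_1}$, and since $C^*=1$ these players need pairwise independent optimal paths of cost at least $2^{c_1-1}$; those support paths are charged to $\pi$'s expansion structure, giving $|\pi^*|\ge c_1\ge 2=|P^*|$ directly, with no partition of $\pi^*$ attempted at all. So the statement survives, but via a case split your proof does not contain and cannot absorb into its partition argument. A secondary (smaller) issue: even in the partitionable case, the budget lower bound $\tfrac12\sum_j 2^{c_j}$ exactly equals the sum of the block requirements $\sum_j 2^{c_j-1}$, so your greedy assignment must waste nothing; ``all edge costs are powers of two'' does not by itself guarantee that prefix sums of a non-increasing power-of-two sequence hit each required threshold exactly (e.g.\ costs $4,4,4$ never sum to exactly $16$), so the feasibility of the $k$-way partition needs an argument even when $c_1^*<c_1$.
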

\begin{IEEEproof}
We prove this by strong induction on the length of player $\pi$'s
path.  For the basis, assume player $\pi$ is on path  $(e_1,e_2) $
of length 2 in $\hG$, with edges of congestion $c_1$ and $c_2$
respectively, where $c_1 \geq c_2$.  Simultaneously consider two
players $\pi_1$ and $\pi_2$ on single edges in game $G'$ with respective
costs $2^{c_1}$ and $2^{c_2}$.  We need to show that every possible
optimal path (i.e expansion edges) for $\pi$ in $\hG$ has two
equivalent optimal paths (of the same or lower total cost) for the
two players $\pi_1$ and $\pi_2$ in $G'$.

Suppose the optimal path of $\pi$ is $\pi^* = (e_1^*, e_2^*, \ldots
e_m^*)$ in non-increasing order of congestion $c_1^* \geq c_2^*
\geq \ldots c^*_m$.  Consider two cases:


{\it Case 1 $c_1^* < c_1$}: Since $\pi$ is in equilibrium, $\sum_{i=1}^m
2^{c^*_i} \geq (2^{c_1} + 2^{c_2})/2$.  Since $c^*_1 < c_1$, there
exists $c^*_j$ such that $\sum_{i=1}^j 2^{c^*_i} = 2^{c_1 - 1}$.
Hence optimal path $\pi^*$ can be partitioned into two paths,
$\pi^*_1 = ( e^*_1, \ldots, e^*_j)$ and $\pi_2^* = (e^*_{j+1} \ldots
e^*_m)$ with costs $C(\pi^*_1) = 2^{c_1-1}$ and $C(\pi^*_2) \geq
2^{c_2-1}$. Thus the edges of $\pi_1^*$ and $\pi^*_2$ can serve as
expansion edges for $\pi_1$ and and $\pi_2$ in alternate game $G'$
with appropriate endpoints, specifically, the endpoints of $\pi^*_1$
and $\pi^*_2$ will be the same as the endpoints of edge $e_1$ and
$e_2$ in $G'$.  Hence $|P^*| = |\pi^*|$ in this case as desired.

{\it Case 2 $c_1^* \geq c_1$}: There are at least $c_1 \geq 2$ players
on player $\pi$'s optimal path with costs $\geq 2^{c_1}$. Since
$C^*=1$, these players must have independent optimal paths of cost
$\geq 2^{c_1-1}$. Hence at least $c_1 \geq 2$ such optimal paths
are needed to support $\pi$ in game $\hG$. In contrast, in game
$G'$, the two players $\pi_1$ and $\pi_2$ can be supported by
two edges of congestion $c_1 \- 1$ and $c_2 \- 1$, respectively.
Hence $|P^*| =2  \leq |\pi^*|$ in this case as well.


For the inductive hypothesis assume $|P^*| \leq |\pi^*|$ for all
paths upto length $k>2$. Consider player $\pi$ occupying edges of
non-increasing congestion $c_1, \ldots, c_{k+1}$ in $\hG$ whose
optimal path has edges of non-increasing congestion $c_1^*, \ldots,
c_m^*$. As before consider two cases,
{\it Case 1 $c_1^* < c_1$}:
let $j_1$ and $j_2$ be the indices such that
1) $\sum_{i=1}^{j_1} 2^{c_i^*} = (2^{c_1} + 2^{c_2})/2$,
and
2) $\sum_{i=j_1}^{j_2} 2^{c_i^*} = (\sum_{i=3}^{k+1} 2^{c_i})/2$.
Note that since $c_1^* < c_1$, indices $j_1$ and $j_2$ exist with
$j_1 < j_2 \leq m$.  Instead of player $\pi$ consider two new players
$P_1$ and $P_2$, where $P_1$ occupies two edges of congestion $c_1$
and $c_2$ and $P_2$ occupies edges of congestion $c_3, c_4 \ldots
c_{k+1}$.  From above, $j_2$ edges are required to satisfy $P_1$
and $P_2$ and $|\pi^*| =m \geq j_2$.  Players $P_1$ and $P_2$ have
path lengths $<k$ and thus by the inductive hypothesis, the number
of expansion edges $P^*$ required to support $P_1$ and $P_2$ assuming
they were replaced by type $A$ players satisfies $|P^*| \leq j_2
\leq |\pi^*|$ as desired.

{\it Case 2 $c_1^* \geq c_1$}: First assume $m \geq 2$. Let $j$ be the
{\it largest} index such that $\sum_{i=1}^{j} 2^{c_i} \leq 2^{c_1^*
+ 1}$.  Clearly $j$ exists since $c_1^* \geq c_1$. Now instead of
player $\pi$, consider two players $P_1$ and $P_2$ with $P_1$
occupying edges of congestion $c_1, c_2, \ldots c_j$ and $P_2$
occupying edges of congestion $c_{j+1}, \ldots c_m$, respectively.
The edge of congestion $c_1^*$ can satisfy $P_1$ while the remaining
edges of the optimal path $\pi^*$ can satisfy $P_2$.  As in the
previous case, players $P_1$ and $P_2$ have path lengths $<k$ and
thus by the inductive hypothesis, the number of expansion edges
$P^*$ required to support $P_1$ and $P_2$ assuming they were replaced
by type $A$ players satisfies $|P^*| \leq |\pi^*|$ as desired.
The case when $m=1$ is omitted for brevity.


\end{IEEEproof}

As a consequence of lemma~\ref{typeB=typeA},  we have
\begin{lemma}
{\it For $\hC > l^* \+ 11$, the expansion chain rooted in stage 1 and
occupying the minimum number of edges consists only of players of
type $A$ (other than the root).
}
\label{onlyAlemma}
\end{lemma}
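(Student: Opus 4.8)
The plan is to start from a stage-$1$ expansion chain that uses the fewest edges---one exists by Lemma~\ref{l^*lemma}, since $\hC > l^* + 11$ is exactly the regime in which that lemma produces a chain rooted at stage~$1$---and to drive every multi-edge player out of its non-root nodes using Lemma~\ref{typeB=typeA}, leaving the root alone. First note that, under the standing assumption $C^*=1$, ``type~$A$'' is just ``occupies exactly one edge'': a player on a single edge of congestion $c\ge 2$ has cost $2^{c}$, which puts it in stage $i=\hC-c+1$, where it belongs to $A^{(i)}$ by definition. So the statement to prove is that a minimum-edge stage-$1$ chain may be chosen so that every node other than the root consists of single-edge players only.

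Suppose some non-root node contains a player $\pi$ occupying edges $e_1,\dots,e_k$ with $k\ge 2$, of non-increasing congestions $c_1\ge\cdots\ge c_k$. Away from the very bottom of the chain we have $c_1\ge 2$ (a path built only of congestion-$1$ edges has cost $\le 2L^*$ and cannot support the players above it once $\hC$ is large), so the hypotheses of the discussion preceding Lemma~\ref{typeB=typeA} hold: that discussion replaces $\pi$ by single-edge players $\pi_1,\dots,\pi_k$ of congestions $c_1,\dots,c_k$ in an alternate game in which $\pi$ and $P=\{\pi_1,\dots,\pi_k\}$ are both in equilibrium and the equilibrium cost is still $2^{\hC}$, and Lemma~\ref{typeB=typeA} gives $|P^*|\le|\pi^*|$. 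Since the new players occupy exactly the edges $e_1,\dots,e_k$, they still support the previous level; and since their combined set of expansion edges is the no-larger set $P^*$, that set serves as the next node of the chain. Hence the modified object is again a stage-$1$ expansion chain with the same equilibrium cost and no more edges than before. Repeating until no non-root node has a multi-edge player yields a chain whose non-root nodes are all of type~$A$, and since the total edge count never increased while we began at the minimum, this chain is still minimum-edge, which is the lemma.

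The step needing care is the ``repeating'': one must keep the cascade of local replacements inside a single coherent alternate game and show the process terminates. I would induct lexicographically on the pair given by the total number of edges of the chain and, as a tie-breaker, the number of multi-edge players in its non-root nodes. The useful structural fact is that in Case~1 of the proof of Lemma~\ref{typeB=typeA} the new expansion set $P^*$ is merely a repartition of the old $\pi^*$ among the $\pi_j$, so everything below that node is untouched and only the tie-breaker drops; whereas in Case~2 the new expansion edges have strictly smaller congestion, which forces strictly less support further down the chain and makes the first coordinate drop. I expect this bookkeeping to be the main obstacle. The remaining loose ends are the degenerate cases set aside in Lemma~\ref{typeB=typeA}---the $m=1$ subcase and single-edge players of congestion~$1$ near the bottom of the chain---but these cannot occur on a minimum-edge chain, which would never route support through a congestion-$1$ edge when a higher-congestion edge would do.
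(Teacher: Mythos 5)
Your proposal is correct and follows essentially the same route as the paper, which obtains this lemma directly ``as a consequence of Lemma~\ref{typeB=typeA}'' by recursively replacing each multi-edge (type $B$) player in a non-root node with single-edge (type $A$) players of the same edge congestions, invoking $|P^*| \leq |\pi^*|$ at each replacement. The paper supplies no explicit proof beyond that one-line derivation, so your added bookkeeping on termination of the cascade and on the degenerate cases merely fills in details the paper leaves implicit.
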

%



Next we derive the size of the smallest network required to support
an equilibrium congestion of $\hC$. Without loss of generality, we
assume there exists at least one type $A$ player in stage 1, i.e a
single edge of congestion $\hC$ and derive the minimum chain rooted
at $A^{(1)}$.  From lemma~\ref{onlyAlemma}, there exists an expansion
chain rooted at $A^{(1)}$ with only type $A$ players. Among all
such expansion chains, the one with the minimum number of players
(equivalently edges, since each type $A$ player occupies a single
edge) is defined below.


\begin{theorem}
{\it
$EC_{min}$, the expansion chain with minimum number of edges that
supports a self-sufficient equilibrium rooted at $A^{(1)}$ is defined
by
$EC_{min} :
A^{(1)} \rightarrow
A^{(l^*+2)} \rightarrow
A^{(2l^*+3)} \rightarrow
A^{(3l^*+4)} \rightarrow
\ldots
A^{(\hC-1)}$
Every player in $EC_{min}$ has an optimal path whose length is the
maximum allowed $L^*$. The depth of chain $EC_{min}$ is $O(\hC/l^*)$.
}
\label{minEC}
\end{theorem}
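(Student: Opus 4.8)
The plan is to reduce the problem, via Lemma~\ref{onlyAlemma}, to expansion chains built entirely from type~$A$ players, and then to run a short optimisation over the sizes of the ``jumps'' between consecutive stages. Throughout I keep the standing assumptions $C^*=1$ and $\hC>l^*+11$, so that Lemma~\ref{onlyAlemma} applies: the minimum-edge chain rooted at $A^{(1)}$ consists, apart from the root, only of type~$A$ players, and (without loss of generality, as in the setup preceding the theorem) the root $A^{(1)}$ is the set of $\hC$ type-$A$ players on a single edge of congestion $\hC$. Thus it suffices to consider chains $A^{(1)}\rightarrow A^{(i_1)}\rightarrow A^{(i_2)}\rightarrow\cdots$ of type-$A$ nodes and to choose the stages $1<i_1<i_2<\cdots$ that minimise the total number of edges used before the chain becomes self-sufficient.

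First I would pin down how far a single step of the chain can reach. A player in a type-$A$ node $A^{(i)}$ lies on one edge of congestion $\hC-i+1$, hence has cost $2^{\hC-i+1}$; being at equilibrium, rerouting onto its optimal path $P^*$ must not help, and since moving onto $P^*$ doubles the cost of each of its edges, the current cost $\Phi(P^*)$ must be at least about $2^{\hC-i}$, while $P^*$ has at most $L^*=2^{l^*}$ edges. If the support players of $A^{(i)}$ are placed at stage $i+\delta$, their edges have congestion $\hC-(i+\delta)+1$, so $P^*$ needs at least roughly $2^{\delta-1}$ of them; combined with the length bound this gives $2^{\delta-1}\le 2^{l^*}$, i.e.\ $\delta\le l^*+1$, and the extreme $\delta=l^*+1$ is attainable only when $P^*$ uses all $L^*$ of its allowed edges, each of congestion exactly $\hC-i-l^*=\hC-(i+l^*+1)+1$. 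So from $A^{(i)}$ the chain advances at most to $A^{(i+l^*+1)}$, and this maximal advance forces a full-length optimal path.

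Next I would set up the counting. Let $m_i$ be the number of distinct edges the chain uses at stage $i$; distinct stages carry distinct congestions, so the chain's total edge count is $\sum_k m_{i_k}$. All players' optimal paths are edges of the single optimal routing $\p^*$, which has congestion $C^*=1$ and hence pairwise edge-disjoint paths; a stage-$i$ node with $m_i$ edges carries $m_i(\hC-i+1)$ players, each requiring at least $2^{\delta-1}$ fresh edges at the next stage $i+\delta$, giving
\[
m_{i+\delta}\ \geq\ m_i\,(\hC-i+1)\,2^{\delta-1}.
\]
The terms grow geometrically, so $\sum_k m_{i_k}=\Theta(m_{\mathrm{last}})$, and iterating over a chain whose $J$ jumps $\delta_1,\delta_2,\dots$ sum to $\approx\hC$ gives $m_{\mathrm{last}}\ \geq\ \big(\prod_k(\hC-i_{k-1}+1)\big)\,2^{\hC-J}$. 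Each factor $\hC-i_{k-1}+1$ is $\geq1$, so fewer jumps only shrinks the product, while $2^{\hC-J}$ also strictly favours fewer jumps; checking the uniform-jump case, where $m_{\mathrm{last}}\approx 2^{\hC}(\hC/c)^{\hC/\delta}$ for an absolute constant $c$ and this is decreasing in $\delta$ since $\hC>l^*+11>c$, confirms the minimum is reached by making every jump as large as the previous step permits, namely $\delta=l^*+1$.

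Reading this off, $EC_{min}$ takes jumps of $l^*+1$ from stage $1$ and so visits exactly the stages $1,\,l^*+2,\,2l^*+3,\,3l^*+4,\dots$, i.e.\ stage $k(l^*+1)+1$, stopping once the stage reaches $\approx\hC-1$, where $\hC-i=O(l^*)$ and, by the argument behind Lemma~\ref{l^*lemma}, the set is already self-sufficient so no further node is forced; this is the claimed chain. Since every jump is maximal, the bound from the first step shows that every player on $EC_{min}$---the root included---uses an optimal path of the maximum length $L^*$, and the depth of $EC_{min}$ is the number of jumps, $(\hC-1)/(l^*+1)=O(\hC/l^*)$. I expect the real work to be the optimisation: making it rigorous for \emph{all} chains rather than only uniform ones (an exchange argument that shrinking a jump below $l^*+1$ and lengthening the chain accordingly cannot decrease $\sum_k m_{i_k}$, while tracking the stage-dependent factors $\hC-i+1$ faithfully), together with verifying that the recursion $m_{i+\delta}\geq m_i(\hC-i+1)2^{\delta-1}$ never overcounts, i.e.\ that the expansion edges at each level are genuinely disjoint from the stage edges and from one another; reconciling the ``$-11$'' slack of Lemma~\ref{l^*lemma} with the clean endpoint $A^{(\hC-1)}$ is only constant-order bookkeeping absorbed into the $O(\cdot)$.
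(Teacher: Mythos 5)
Your proposal follows essentially the same route as the paper: reduce to all-type-$A$ chains via Lemma~\ref{onlyAlemma}, use the equilibrium condition together with $C^*=1$ (edge-disjoint optimal paths) to count at least $(\hC-i+1)\,2^{\delta-1}$ fresh expansion edges per stage-$i$ edge when jumping $\delta$ stages, bound the jump by $\delta\le l^*+1$ with equality forcing a full-length optimal path $L^*$, and conclude that maximal jumps minimize the edge count, giving the stage sequence $1,\,l^*+2,\,2l^*+3,\dots$ and depth $O(\hC/l^*)$. The only organizational difference is that the paper packages the jump-size optimisation as two pairwise exchange lemmas (Lemmas~\ref{techlemma1} and~\ref{techlemma2}, comparing a direct jump to $A^{(i+l^*+1)}$ against a detour through an intermediate stage, with exactly your edge counts), whereas you run one global product comparison over the whole jump sequence; the two are interchangeable, and both defer the same rigor that the paper itself omits (non-uniform congestion along optimal paths, fully general chains, and the fact that one is comparing achievable counts rather than just lower bounds). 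One sentence of yours is backwards: $2^{\hC-J}$ \emph{grows} when $J$ shrinks, so that factor pulls against fewer jumps rather than for them; the correct accounting is that deleting a jump removes a factor $\hC-i+1\ge 2$ from the product while only doubling $2^{\hC-J}$, a net non-increase, which is exactly what your uniform-jump sanity check reflects --- a slip in wording, not in substance.
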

For technical reasons, we don't terminate $EC_{min}$ with players
from $A^{(\hC)}$ i.e single edges of congestion 1. Such a network
can be shown to be unstable (i.e no equilibrium exists). Rather,
the optimal paths of players from $A^{(\hC-1)}$ (i.e with player
cost 4) are of length 2 with congestion 0 in $\hG$. This does not
affect our count of the total number of edges required to derive
the $PoA$ below. We need a lower bound on the number of edges to
derive an upper bound on the $PoA$, so (under)counting $EC_{min}$
only upto stage $A^{(\hC-1)}$ is acceptable for our purposes.

To prove this theorem, we need a couple of technical lemmas which
determine the minimum rate of expansion of an expansion chain. We
describe these lemmas using the preliminary setup below.
Let $\pi$ denote the set of $\hC-i+1$ players occupying a single
edge in $A^{(i)}$, for some $i \geq 1$. Let $\pi_m \in \pi$ denote
an arbitrary player with $\pi^*_m = (e_1, e_2, \ldots e_k)$ denoting
$\pi_m$'s optimal path, where $k \leq L^*$.  For the moment, assume
all edges on $\pi^*_m$ have the same congestion $c$.  We first note
that the largest stage from which type $A$ players can support
$\pi_m$ is $i + l^* +1$ since the player cost is $PC_m = 2^{\hC -i
+1}$ and we must have $k \cdot 2^{c} \geq 2^{\hC -i}$.  Using $k
\leq L^*$, we must have congestion $c \geq \hC -i - l^*$ and the
largest stage where this is possible is stage $i + l^* + 1$.
Now consider the two (partial) expansion chains
$EC_1: \pi \rightarrow A^{(i+1+l^*)} $ and
$EC_2: \pi \rightarrow A^{(i+j)} \rightarrow A^{(i+1+l^*)} $,
where $1 \leq j \leq l^* $.
We evaluate both chains at stage $i+l^*-1$.
Let $|EC_1|$ and $|EC_2|$ denote the number of edges in the respective
chains. Then we have,

\begin{lemma}
{\it $|EC_1| \leq |EC_2|$, i.e expanding directly to the $l^*+1$th succeeding
stage is cheaper than expanding via an intermediate stage.}
\label{techlemma1}
\end{lemma}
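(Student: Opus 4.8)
The plan is to prove $|EC_1|\le|EC_2|$ by locating every node of the two chains on the stage axis and then comparing, at the checkpoint stage $i+l^*-1$, which nodes have already contributed edges. The one fact I keep using is the congestion/stage correspondence built into the definitions: $A^{(s)}$ is a stage-$s$ node and consists of players occupying one edge of congestion $\hC-s+1$.

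First I would fix the positions of the nodes. The root $\pi\subseteq A^{(i)}$ is $\hC-i+1$ players on a single edge of congestion $\hC-i+1$, so $\pi$ is a stage-$i$ node and contributes exactly one edge. In $EC_1: \pi\rightarrow A^{(i+1+l^*)}$ the support node has, by definition, edges of congestion $\hC-(i+1+l^*)+1=\hC-i-l^*$ --- the smallest congestion admissible for a support set built of type-$A$ players --- and, in the uniform-congestion case assumed here (``all edges of $\pi^*_m$ have the same congestion $c$''), attaining it forces every optimal path of $\pi$ to have the maximal length $L^*$; so this node sits at stage $i+1+l^*$. In $EC_2: \pi\rightarrow A^{(i+j)}\rightarrow A^{(i+1+l^*)}$ with $1\le j\le l^*$, the intermediate node realizes $\pi$'s support via edges of the larger congestion $\hC-(i+j)+1$, so it sits at stage $i+j\le i+l^*$, and its own support node again sits at stage $i+1+l^*$.

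Now I would read off the counts at the checkpoint. For $EC_1$ at stage $i+l^*-1$: the chain has no node at stages $i+1,\dots,i+l^*$ and its support node is at stage $i+1+l^*>i+l^*-1$, so the only node of $EC_1$ at or below the checkpoint is $\pi$ itself, and $|EC_1|$ counts exactly the one edge of $\pi$. For $EC_2$ at stage $i+l^*-1$: the root $\pi$ always contributes, and whenever $j\le l^*-1$ the intermediate $A^{(i+j)}$, being at stage $i+j\le i+l^*-1$, contributes at least one further edge; for the remaining value $j=l^*$ the intermediate lands at stage $i+l^*>i+l^*-1$ and is not yet charged, so $EC_2$ again contributes only $\pi$'s edge. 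In every case the nodes of $EC_1$ charged by the checkpoint form a subset of those of $EC_2$, so $|EC_1|\le|EC_2|$ --- strictly when $1\le j\le l^*-1$ --- which is the claim.

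All the content is in the indexing, and that is where I expect the only friction: one must check that the checkpoint $i+l^*-1$ is placed exactly so as to lie strictly below $EC_1$'s support node (stage $i+1+l^*$) and strictly below the $j=l^*$ intermediate (stage $i+l^*$), while lying at or above every intermediate with $j\le l^*-1$. This reduces to the stage computations $\hC-(i+1+l^*)+1=\hC-i-l^*$ and $\hC-(i+j)+1$, to the inequality $i+j\le i+l^*$ for $j\le l^*$, and to handling $j=l^*$ separately; nothing sharper is needed, precisely because the comparison is made before either chain reaches the node $A^{(i+1+l^*)}$.
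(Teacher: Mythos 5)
There is a genuine gap: your reading of ``evaluate both chains at stage $i+l^*-1$'' as a cutoff that excludes the terminal node $A^{(i+1+l^*)}$ empties the lemma of its content. Under that reading $|EC_1|$ collapses to the single root edge and, for $j=l^*$, so does $|EC_2|$, so the inequality becomes a vacuous $1\le 1$ plus a subset observation about which nodes happen to lie below an arbitrary line. The lemma is needed later (in Theorem~\ref{minEC} and the recursion $E_k=(L^*)^kP_0\Pi_t C_t$) as a quantitative statement about the \emph{total} number of expansion edges each route forces into the graph, terminal stage included; a comparison that never charges the stage where almost all of the edges live cannot feed that recursion. Indeed the paper computes $|EC_1|=(\hC-i+1)\,2^{l^*}$, not $1$, which directly contradicts your count.

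The substance you are missing is the edge count itself. For $EC_1$: the $\hC-i+1$ players of $\pi$ each need an optimal path of length $L^*=2^{l^*}$ made of congestion-$(\hC-i-l^*)$ edges, giving $(\hC-i+1)2^{l^*}$ expansion edges. For $EC_2$: the intermediate node uses edges of congestion $\hC-i-j+1$, so each of the $\hC-i+1$ optimal paths must have length at least $2^{j-1}$ (from $l\cdot 2^{\hC-i-j+1}\ge 2^{\hC-i}$), producing at least $(\hC-i+1)(\hC-i-j+1)2^{j-1}$ \emph{new players}, each of whom then needs its own optimal path of length at least $2^{l^*-j}$ at stage $i+1+l^*$. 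Summing the two stages gives $|EC_2|\ge(\hC-i+1)\bigl[2^{j-1}+\tfrac{1}{2}(\hC-i-j+1)2^{l^*}\bigr]$, and since $\hC\ge i+j+1$ the bracket is at least $2^{l^*}$, which yields $|EC_1|\le|EC_2|$. The multiplicative blow-up $(\hC-i-j+1)$ caused by the extra players spawned at the intermediate stage is the whole point of the lemma, and none of it appears in your argument.
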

\begin{IEEEproof}
First consider $EC_1$. Since $|\pi| = \hC-i+1$ and $C^*=1$, there
are $\hC-i+1$ optimal paths at the first expansion stage of $EC_1$.
Each optimal path length is the longest allowed i.e $2^{l^*}$.
Clearly $\hC -i -l^*$ players on each edge of each such path are
enough to support the equilibrium cost of $\pi$. Thus the total
of expansion edges in $EC_1$ is $(\hC -i +1) 2^{l^*}$.

For $EC_2$, again there are $\hC-i+1$ optimal paths at the first
expansion stage.  However each edge of each optimal path now has
congestion $\hC -i -j +1$. Each optimal path must have length $l
\geq 2^{j-1}$, since $l \cdot 2^{\hC-i-j+1} \geq 2^{\hC-i}$.
Thus the total number of {\it edges} at this stage of $EC_2$ is at
least $(\hC-i+1)2^{j-1}$ while the total number of {\it players}
is at least $(\hC-i+1)(\hC -i -j +1) 2^{j-1}$.
Each of these players has its own optimal path, with each edge on a
path having congestion $\hC -i - l^*$, by definition of $EC_2$.
The cost of each optimal path must be at least $2^{\hC-i-j}$ and
so the length $l$ of each such path is at least $2^{l^*-j}$ since
$l \cdot 2^{\hC -i - l^*} \geq 2^{\hC-i-j}$.
Thus the total number of edges in this stage of
$EC_2$ is at least $(\hC-i+1)(\hC -i -j +1)2^{j-1} 2^{l^*-j}$.
Adding the edges in both stages and simplifying, we get the
overall number of edges required to support the equilibrium
of $\pi$ in $EC_2$ as
\beq
(\hC -i +1) \left [ 2^{j-1} + \frac{(\hC -i -j +1)}{2} 2^{l^*}  \right ]
\eeq
Using the fact that $\hC \geq i+j + 1$ by definition of expansion,
we can see that the number of edges in $EC_2$ is at least as much
as $|EC_1| = (\hC -i +1)2^{l^*}$.
\end{IEEEproof}


Now consider the two (partial) expansion chains $EC_3: \pi \rightarrow
A^{(i+1+l^*)} \rightarrow A^{(i+1+l^* +k)} $ and $EC_4: \pi \rightarrow
A^{(i+1+l^*-j)} \rightarrow A^{(i+1+l^* +k)} $ where $1 \leq j \leq
l^* $ and $j+k \leq l^*+1$.  (Note that the condition on $j+k$ is
because one cannot directly expand beyond $l^*+1$ stages due to the
maximum optimal path length constraint).  Then we have

\begin{lemma}
{\it $|EC_3| \leq |EC_4|$. Expanding to larger stages (i.e any stage after $i+ l^*+1$)
is cheaper via stage $i+ l^*+1$ than via any intermediate stage before it.
Equivalently (since larger stages imply expansion edges with lower congestions),
when starting from stage $i$ it is cheapest to expand via the lowest possible
congested edges which are in stage $i+ l^*+1$.
}
\label{techlemma2}
\end{lemma}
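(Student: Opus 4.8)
The plan is to write down, for each of the two chains, an explicit formula for the total number of expansion edges accumulated through the common terminal stage $A^{(i+1+l^*+k)}$, and then to compare the two formulas algebraically. The only ingredients I need are the ones already used in the proof of Lemma~\ref{techlemma1}: a type-$A$ player in stage $s$ occupies a single edge of congestion $\hC-s+1$ and hence has cost $2^{\hC-s+1}$; an optimal path of a player in equilibrium must have cost at least half the player's cost; and every optimal path has length at most $L^*=2^{l^*}$. The hypothesis $j+k\le l^*+1$ is precisely what keeps the second leg of $EC_4$ within this length bound.

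First I would count $|EC_3|$. The first leg $\pi\to A^{(i+1+l^*)}$ advances $l^*+1$ stages, so it uses the $\hC-i+1$ longest allowed paths, of length $2^{l^*}$, contributing $(\hC-i+1)2^{l^*}$ edges, each carrying $\hC-i-l^*$ players. The second leg $A^{(i+1+l^*)}\to A^{(i+1+l^*+k)}$ advances $k$ stages; its edges have congestion $\hC-i-l^*-k$, and since the source players have cost $2^{\hC-i-l^*}$, each of their optimal paths has length at least $2^{k-1}$, so this leg adds $(\hC-i+1)2^{l^*}(\hC-i-l^*)2^{k-1}$ edges. Hence
\[
|EC_3| \;=\; (\hC-i+1)\,2^{l^*}\bigl[\,1+(\hC-i-l^*)\,2^{k-1}\,\bigr].
\]
The same accounting applied to $EC_4$: its first leg advances only $l^*+1-j$ stages, so its edges have congestion $\hC-i-l^*+j$ and its optimal paths have length at least $2^{l^*-j}$, producing $(\hC-i+1)2^{l^*-j}$ edges each with $\hC-i-l^*+j$ players; its second leg advances $j+k$ stages (legal exactly because $j+k\le l^*+1$), forcing optimal paths of length at least $2^{j+k-1}$ and adding $(\hC-i+1)2^{l^*-j}(\hC-i-l^*+j)2^{j+k-1}$ edges, so
\[
|EC_4| \;=\; (\hC-i+1)\,2^{l^*-j}\bigl[\,1+(\hC-i-l^*+j)\,2^{j+k-1}\,\bigr].
\]

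To finish, I would divide the target inequality $|EC_3|\le|EC_4|$ by $(\hC-i+1)2^{l^*-j}$, expand both sides, and cancel the common term $(\hC-i-l^*)2^{j+k-1}$; what remains is $2^{j}-1\le j\cdot 2^{j+k-1}$, which holds for all $j\ge 1$ and $k\ge 1$ since $j\cdot 2^{j+k-1}\ge j\cdot 2^{j}\ge 2^{j}>2^{j}-1$. This establishes $|EC_3|\le|EC_4|$. I would then add the remark that at the merge stage $A^{(i+1+l^*+k)}$ chain $EC_3$ in fact has no more edges than $EC_4$ and the two chains have the same number of players per edge there, so any common continuation beyond this stage keeps $EC_3$ no more expensive---this is the ``global'' form of the statement and is what licenses always expanding via stage $i+1+l^*$.

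The step I expect to be the real difficulty is the bookkeeping rather than any idea: pinning down which leg advances $l^*+1-j$, $k$, or $j+k$ stages together with the matching path-length lower bounds $2^{l^*-j}$, $2^{k-1}$, $2^{j+k-1}$, and being careful not to drop the ``$+1$'' terms coming from the first leg of each chain. Those lower-order terms are exactly what makes the comparison reduce to the clean inequality $2^{j}-1\le j\cdot 2^{j+k-1}$; discarding them would make the bound look tight---or even fail---in the corner case $j=k=1$, so they must be carried through the algebra.
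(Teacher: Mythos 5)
Your proposal is correct: the edge counts for both legs of $EC_3$ and $EC_4$ match the conventions established in the proof of Lemma~\ref{techlemma1} (congestion $\hC-s+1$ in stage $s$, optimal-path cost at least half the player cost, length bounded by $L^*=2^{l^*}$), and the reduction to $2^j-1\le j\cdot 2^{j+k-1}$ is valid. The paper omits this proof, stating only that it ``counts edges similar to the previous lemma,'' and your argument is precisely that omitted edge-counting argument, so it takes the same approach the paper intends.
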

Due to space constraints, we skip the proof which counts edges
similar to the previous lemma.  The proof of  Theorem~\ref{minEC}
follows from lemmas~\ref{techlemma1}-~\ref{techlemma2}, using the
fact that starting from any stage $i$, the minimum cost expansion
arises by selecting players from stage $i+l^*+1$ to occupy expansion
edges, with all optimal path lengths being the maximum possible
$L^*$.  Due to space constraints, we omit a formal proof by induction
for showing that the number of expansion edges is minimized when
all edges on an optimal path have the same congestion.


$EC_{min}$ defined in Theorem~\ref{minEC} is also the minimum sized
chain when the root players are from $B^{(1)}$ or $D^{(1)}$ although
the number of edges required in the supporting graph is slightly
different as we see later. In these cases, all stages (other than
the root) in the minimum expansion chain consist of type $A$ players
by lemma~\ref{onlyAlemma} and the proof of Theorem~\ref{minEC} is
immediately applicable in choosing the specific indices of the
expansion stages required to support the equilibrium).  As we will
show later, the $PoA$ is maximized when the chain is rooted at
$A^{(1)}$.

\begin{theorem}
{\it When $C^*=1$, the upper bound $\kappa$ on the Price of Anarchy
$PoA$ of game $\hG$ is given by the minimum of
1) $ \kappa =  O(\log L^*) $ or 2)
$\kappa \big( \log (\kappa L^*) \big) \leq \log L^* \cdot \log |E|$
\label{PoATheorem}
}
\end{theorem}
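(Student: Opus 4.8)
The strategy is to count edges in the minimum supporting network $EC_{min}$ from Theorem~\ref{minEC} and compare against $C^*=1$ to force an upper bound on $\hC$. First I would set up the recursion: by Theorem~\ref{minEC}, the chain rooted at $A^{(1)}$ has depth $O(\hC/l^*)$, jumping from stage $i$ to stage $i+l^*+1$ at each step. Let $n_t$ denote the number of type-$A$ players (equivalently edges) at depth $t$ of $EC_{min}$. A single edge of congestion $\hC-i+1$ at stage $i$ carries $\hC-i+1$ players, each needing its own optimal path of length exactly $L^*$, and each edge on that path has congestion $\hC-i-l^*$, hence carries $\hC-i-l^*$ players that must be supported at the next stage. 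So the branching at step $t$ multiplies the edge count by roughly $L^*$ and the player count by another factor on the order of the current congestion. Unrolling over the $O(\hC/l^*)$ levels gives a total edge count of order $(L^*)^{\Theta(\hC/l^*)}$ times polynomial-in-$\hC$ factors; taking logarithms, $\log|E| \gtrsim (\hC/l^*)\cdot l^* + (\text{lower order}) $, but I must be careful — the naive bound $\log|E|\ge \hC$ is too weak. The refinement is to track the player-count factors $(\hC - i - l^*)$ across the $\Theta(\hC/l^*)$ stages: these multiply to something like $(\hC/l^*)!^{\,l^*}$-ish, whose logarithm is $\Theta\!\big(\hC \log(\hC/l^*)\big)$. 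Combined with the $L^*$-branching contributing $\Theta(\hC)$ to the log, one gets $\log|E| \gtrsim \hC\big(\log L^* + \log(\hC/l^*)\big) = \hC \log(\hC L^*/l^*)$, roughly $\hC\log(\hC L^*)$ up to the $l^*$ in the denominator.

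**Extracting the two cases.** Writing $\kappa$ for the resulting upper bound on $\hC/C^* = \hC$, the inequality from the edge count reads $\kappa\big(\log(\kappa L^*)\big) \lesssim \log|E|$; but Lemma~\ref{l^*lemma} only guarantees the whole expansion machinery (existence of chains down to a self-sufficient set) under the hypothesis $\hC > l_1^* + 11$, i.e.\ roughly $\hC = \Omega(l^*)$. If $\hC = O(l^*) = O(\log L^*)$, the chain may terminate trivially and we simply get $\kappa = O(\log L^*)$ directly — this is Case 1. Otherwise $\hC$ is large enough that the full chain $EC_{min}$ of depth $\Theta(\hC/l^*)$ exists, and the edge-counting inequality of Case 2 applies. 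So the theorem is just the statement "$\hC$ is at most whichever of these two bounds is smaller," and the proof is: if the first hypothesis fails we are in the regime of Lemma~\ref{l^*lemma} and run the chain count to get (2); the claimed $\kappa$ is the min because one of the two always holds.

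**The main obstacle.** The delicate part is the edge-count recursion in Case 2 — specifically, pinning down that $\log|E|$ picks up a $\log(\kappa)$ factor and not merely a constant. The subtlety is that at deeper stages the congestion $\hC - i$ shrinks, so the per-edge player multiplicity decreases along the chain; one must verify that the product of these shrinking factors over $\Theta(\hC/l^*)$ stages still contributes $\Theta(\hC\log(\hC/l^*))$ to $\log|E|$ rather than collapsing. I would handle this by bounding the product $\prod_{t} (\hC - i_t - l^*)$ from below using that the stages $i_t \approx t\,l^*$ are roughly arithmetic, so the factors are $\hC - l^*, \hC - 2l^*, \ldots$, a product comparable to $(l^*)^{\hC/l^*}\cdot (\hC/l^*)!$, whose log is $\Theta\big((\hC/l^*)\log(\hC/l^*) + (\hC/l^*)\log l^*\big) = \Theta\big((\hC/l^*)\log \hC\big)$ — wait, that needs the $L^*$-branching too; combining the $\Theta(\hC)$ from path-length branching with this gives the clean $\kappa\log(\kappa L^*) \lesssim \log|E|$. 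A secondary nuisance is bookkeeping the $O(1)$ additive slack (the "$+11$", $l_1^*$ vs $l^*$) throughout, which only affects constants and can be absorbed into the $O(\cdot)$ in Case 1 and the asymptotic inequality in Case 2. Finally I would note that rooting at $B^{(1)}$ or $D^{(1)}$ instead of $A^{(1)}$ only changes the supporting graph size by lower-order terms (as remarked after Theorem~\ref{minEC}), so the $A^{(1)}$-rooted chain indeed gives the worst case and hence the stated bound.
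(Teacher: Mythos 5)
Your proposal follows essentially the same route as the paper: handle the no-expansion regime via Lemma~\ref{l^*lemma} to get $\kappa = O(\log L^*)$, and otherwise count the edges of $EC_{min}$ through the recursion $E_k \approx L^* \cdot E_{k-1}\cdot(\text{congestion at depth }k-1)$ over depth $\Theta(\hat{C}/l^*)$, estimate the product of congestion factors by Stirling, and take logarithms. The one slip is in your final bookkeeping of the $l^*$ factor: since the chain has only $\Theta(\hat{C}/l^*)$ levels, the congestion factors contribute $\Theta\big((\hat{C}/l^*)\log \hat{C}\big)$ to $\log|E|$ (as you correctly state in your last paragraph, not the $\Theta(\hat{C}\log(\hat{C}/l^*))$ of your first paragraph), and together with the $\Theta(\hat{C})$ from the $L^*$-branching this yields $\log|E| \gtrsim (\hat{C}/l^*)\log(\hat{C}L^*)$, hence $\kappa\log(\kappa L^*) \lesssim l^*\log|E| = \log L^*\cdot\log|E|$ --- not the stronger $\kappa\log(\kappa L^*)\lesssim \log|E|$ you write at the end, which does not follow from your (correct) intermediate estimate. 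With that factor of $\log L^*$ restored on the right-hand side, your argument matches the paper's Eq.~(\ref{mainpoa}) and proves the stated theorem.
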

\begin{IEEEproof}
To obtain an upper bound on the $PoA$, we want to find the smallest
graph that can support an equilibrium cost of $2^{\hC}$. Since the
optimal path length $L^*$ can range from $O(1)$ to $O(|E|)$, we
evaluate smallness both in terms of path length and number of edges.

Clearly, in the case when there is no expansion in $\hG$, the Price
of Anarchy is $O(\log L^*)$, since by lemma~\ref{l^*lemma}, $\hC
\leq l_1^* + 11$ and the $PoA = \hC/C^* = O(\log L^*)$.  Consider
the case when there is expansion in the network i.e $\hC >> \log
L^*$. To bound the $PoA$, we will compute the number of edges in
the minimum sized expansion chain.  First assume there exists a
single edge of congestion $\hC$ (labeled as player set $\pi$) and
exactly one expansion chain $EC_{min}: \pi \rightarrow A^{(l^*+2)}
\rightarrow A^{(2l^*+3)} \rightarrow \ldots$ in the graph i.e the
only players in the graph are those required to be on the expansion
edges of $EC_{min}$.  Using the standard notion of depth, the node
corresponding to the player set $A^{(1+k(l^*+1))}$ on $EC_{min}$
is defined to be at depth $k$, with the root node at depth 0.  At
a given depth $k$, we define the following notations: Let $E_k$
denote the total number of expansion edges at depth $k$ (i.e the
edges on comprising the optimal paths of players at depth $k-1$),
$P_k$ denote the minimum number of players who require players from
$p_{k+1}$ on their optimal paths and $C_k$ denote the congestion
on any expansion edge.

At depth 0, we have $E_0 = 1$ (a single edge $e$ of congestion $C_0
= \hC$) Note that $P_0 = \hC -1$. Even though we have $\hC$ players,
one of these players might have its optimal path coincident with
edge $e$.  However for all $k>0$, $P_k = E_k C_k$ since all the
edges in $E_k$ are already optimal edges of players from $P_{k-1}$.
We also have $C_k = \hC - kl^* -k$ (by definition of type $A$
congestion), and finally $E_k = P_{k-1} L^*$, since every packet
in $P_{k-1}$ has its own optimal path ($C^*=1$) and every optimal
path on $EC_{min}$ is of length $L^*$.  Putting these together, we
obtain a recursive definition of $E_k = (L^*)^k P_0 \Pi_{t=1}^{k-1}
C_{t}$. We terminate our evaluation of the expansion chain when
expansion edges have a congestion of 2, i.e $\hC - kl^* -k = 2$
which implies a depth of $d = (\hC -2)/(l^*+1)$.

For technical reasons, we don't terminate the chain with players
from $A^{(\hC)}$ i.e single edges of congestion 1. Such a network
can be shown to be unstable (i,e no equilibrium exists. Rather, the
optimal paths of players from $A^{(\hC-1)}$ (i.e with player cost
4) are of length 2 with congestion 0 in $\hG$. This does not
significantly affect our count of the total number of edges required
to derive the $PoA$ below.

Thus the total number of edges in $EC_{min}$ is bounded by
\bea
|EC_{min}| &=  &1 + (\hc \- 1) \Big[ L^* +  (L^*)^2 (\hC \- l^* \-1) + \nonumber \\
& & \ \ \ldots + (L^*)^d \pi_{t=1}^{t=d} (\hC \- tl^* \-t) \Big]
\label{ecmineq}
\eea

With some algebraic manipulations, we can bound
Eq.~\ref{ecmineq} as
\beq
|EC_{min}| \geq \left( e^{-\hC} \hC^{\hC} \sqrt{\hC} (L^*)^{\hC} \right)^{\frac{1}{l^*}}
\eeq

Let $|E|$ denote the actual number of edges in graph $G$. Since
$C^*=1$, the Price of Anarchy is $\hC$. Using $\kappa$ to denote
the upper bound on the $PoA$ and simplifying, we get
\beq
\kappa( \log (\kappa L^*) -1) \leq \log L^* \cdot \log |E|
\label{mainpoa}
\eeq
Hence the $PoA$ is bounded by a polylog function of $\log |E|$ in the worst case.
\end{IEEEproof}

Can we get a larger upper bound on the $POA$ if the expansion chain
is rooted at $B^{(1)}/ D^{(1)}$ instead of $A^{(1)}$? To examine
this, let $\hC - q$ be the largest congestion in $\hG$, $q>0$. We
need $2^q$ such edges in order to satisfy the maximum player cost
of $2^{\hC}$. All these edges can be used as expansion edges for
other players.  From the analysis in Theorem~\ref{PoATheorem}, we
note that expansion between stages occurs at a factorial rate. Thus
using these $2^q$ edges as high up in the chain as possible (thereby
reducing the need for new expansion edges) will minimize the expansion
rate. The best choice for $q$ then is $l^*$. In this case, we have
a single player $\pi_m$ in equilibrium in $\hG$, occupying $L^*$
edges of congestion $\hC - l^*$. These $L^*$ edges are also the
optimal edges of $\pi_m$, i.e its equilibrium and optimal paths are
identical. Hence the first stage of expansion in this chain is for
the $L^* (\hC -l^* -1)$ players on the $L^*$ edges of $\pi_m$.  From
this point on the minimum sized chain for this graph is identical
to the minimum sized chain $EC_{min}$ defined above.  The total
number of edges in this chain can be computed in a manner similar
to above. While the number of edges is smaller than $EC_{min}$, it
can be shown that the $PoA$ is also smaller $\hC -l^*$.  Hence the
upper bound on the $PoA$ is obtained using an expansion chain rooted
at $A^{(1)}$.

So far we have assumed the optimal bottleneck congestion $C^*=1$
in our derivations.  We now show that increasing $C^*$ decreases
the $PoA$ and hence the previous derivation is the upper bound.  We
first evaluate the impact of $C^* = M > 1$ on expansion chains.
Having $C^* >1$ implies that more players can share expansion edges
and thus the rate of expansion as well as the depth of an expansion
chain (if it exists) should decrease.  We first show that expansion
chains exist even for arbitrary $C^*=M$.

\begin{lemma}
{\it Given a non-empty player set $X^{(i)} \in \{ A^{(i)}, B^{(i)},
D^{(i)} \}$, either there exists an expansion chain rooted at $X^{(i)}$
or the players of $X^{(i)}$ are on the expansion chain of other
players for all stages $i: \hC - i > 8M + l_1^* +2$. }
\label{C^*lemma}
\end{lemma}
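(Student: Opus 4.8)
The plan is to follow the proof of Lemma~\ref{l^*lemma} almost line for line, re-examining the three cases $D^{(i)}$, $B^{(i)}$, $A^{(i)}$ and showing that none of them is self-sufficient as long as $\hC - i > 8M + l_1^* + 2$; once a set $X^{(i)}$ fails to be self-sufficient, the alternative conclusion (that its players lie on the expansion chain of some other player) is immediate from the definition of support sets. The only structural change caused by $C^* = M > 1$ is that the optimal routing $\p^*$ may route up to $M$ paths through any single edge, so that one ``internal'' optimal edge of a set $X^{(i)}$ can now serve as an expansion edge for as many as $M$ of its players at once, rather than for just one player as when $C^* = 1$. The task is therefore to locate where this factor $M$ enters the counting and to verify that it weakens self-sufficiency only by the stated additive amount $8M$ in the gap $\hC - i$.

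The cases $D^{(i)}$ and $A^{(i)}$ are essentially insensitive to $M$. For $D^{(i)}$, the maximum equilibrium congestion of an edge used only by $D^{(i)}$ players is still at most $\hC - i - l_1^* - 2$ and an optimal path still has length at most $L^*$, so the largest possible cost of an optimal path composed exclusively of $D^{(i)}$ edges is unchanged at $2^{\hC - i - 2} + 2^{\hC - i - l_1^* - 2}$, strictly below the required $\Phi(P^*)$; hence $D^{(i)}$ is never self-sufficient in the range of interest. For $A^{(i)}$, the $|A^{(i)}|$ players partition into groups of $\hC - i + 1$ sharing a single edge of congestion $\hC - i + 1$, so there are at most $|A^{(i)}|/(\hC - i + 1)$ internal edges available as optimal edges, and with $C^* = M$ each of them supports at most $M$ players; since $\hC - i > 8M + l_1^* + 2$ forces $\hC - i + 1 > M$, at most $M|A^{(i)}|/(\hC - i + 1) < |A^{(i)}|$ players can be supported internally, so $A^{(i)}$ is not self-sufficient either.

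The substantive case is $B^{(i)}$, and here I would reproduce the chain of inequalities Eqs.~(\ref{lowerbound1})--(\ref{finalbound}) with the factor $M$ inserted at the one place it belongs. Inequality~(\ref{lowerbound1}), which comes from the fact that each player of $B^{(i)}$ has total equilibrium cost at most $2^{\hC - i + 1}$, refers only to the Nash routing $\p$ and is unchanged. Inequality~(\ref{upperbound1}) does change: summing the per-edge contributions $\Phi(P^*)/2^{j-1}$ over all $|B^{(i)}|$ optimal paths counts each edge of $B^{(i)}_j$ once per optimal path through it, and with $C^* = M$ there are at most $M$ such paths, so the left-hand side acquires a factor $M$, giving $M\big(|B^{(i)}_0| + \sum_{j \ge 1} |B^{(i)}_j|/2^{j-1}\big) > |B^{(i)}|$. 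The ``low-congestion edges carry less than half the cost'' step Eq.~(\ref{upperbound2}) survives verbatim, since it concerns a single optimal path of length at most $L^*$, and propagating the extra factor $M$ through Eq.~(\ref{upperbound3}) and the comparison with Eq.~(\ref{lowerbound1}) turns the constant $8$ of Eq.~(\ref{finalbound}) into $8M$, yielding
$$\sum_{j=0}^{l_1^* + 2} \frac{|B^{(i)}_j|}{2^{j+1}}\big(8M + j - (\hC - i)\big) \ge 0,$$
which is impossible once $\hC - i > 8M + l_1^* + 2$ (using $j \le l_1^* + 2$ and $|B^{(i)}_j| > 0$ for some such $j$). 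This contradicts the assumed self-sufficiency of $B^{(i)}$.

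Combining the three cases, none of $A^{(i)}$, $B^{(i)}$, $D^{(i)}$ is self-sufficient at any stage $i$ with $\hC - i > 8M + l_1^* + 2$, so either an expansion chain is rooted at $X^{(i)}$ or the players of $X^{(i)}$ already lie on the expansion chain of other players, as claimed. The main obstacle is the $B^{(i)}$ bookkeeping: one must make sure the multiplicity-$M$ sharing of optimal edges is applied only to quantities that count optimal paths (never to the equilibrium quantities underlying Eq.~(\ref{lowerbound1})), that it enters the final inequality \emph{linearly} in $M$ rather than exponentially, so that the penalty is the additive $8M$ rather than something like $2^{\Theta(M)}$, and that the notion of ``edges composed exclusively of players in $B^{(i)}$'' together with the per-edge cost-contribution estimates remain valid when optimal paths overlap. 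A secondary point to check is that raising $C^*$ does not disturb the equilibrium lower bound $\Phi(P^*) > 2^{\hC - i - 1}$ on optimal-path cost used throughout; this still follows purely from the stage-$i$ players being at equilibrium.
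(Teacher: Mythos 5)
Your proposal matches the paper's own proof essentially step for step: the $D^{(i)}$ and $A^{(i)}$ cases are handled exactly as the paper does (the per-path cost bound for $D^{(i)}$ is unaffected by $M$, and the $A^{(i)}$ count reduces to $\hC - i + 1 > M$), and your insertion of the factor $M$ into Eq.~(\ref{upperbound1}) is algebraically identical to the paper's replacement of the right-hand side by $|B^{(i)}|/M$, propagating to the same constant $8M$ in Eq.~(\ref{C^*finalbound}). The argument is correct at the same level of rigor as the paper's sketch, so no further comparison is needed.
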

\begin{IEEEproof}
We provide a brief outline of the proof. First consider the case
of players from $A^{(i)}$.  As before, the maximum number of optimal
edges available from within the set is $|A^{(i)}|/(\hC \- i \+ 1)$.
However each group of $M$ players could have their optimal paths
(of length one) on one such edge. Thus the number of distinct optimal
paths (edges) required is only $|A^{(i)}|/M$. If $|A^{(i)}|/M \leq
|A^{(i)}|/(\hC \- i \+ 1)$ or equivalently $\hC-i \leq M-1$, then the players
in $A^{(i)}$ are in a self-sustained equilibrium. This is not true
for the given value of $i$ in the lemma and hence there must be an
expansion chain rooted at $A^{(i)}$.
Similarly for the case of players from $B^{(i)}$, the main modification
from ~\ref{l^*lemma} is in Eq.~\ref{upperbound1} which now becomes
\beq
|B^{(i)}_0| + \sum_{j=1}^{\hc \- i \- 1} \dfrac{|B^{(i)}_j|}{2^{j-1}}
>
|B^{(i)}|/M
\label{c^*ub1}
\eeq
for making $B^{(i)}$ self-sustained since the set of $B^{(i)}$
players only need $|B^{(i)}|/M$ optimal paths.  Following the same
derivation as in lemma~\ref{l^*lemma}, Eq.~\ref{finalbound} becomes
\beq
\sum_{j=0}^{l_1^* + 2} \dfrac{|B^{(i)}_j|}{2^{j+1}}
\cdot
\big( 8M \+ j \- (\hC \- i) \big)
\geq  0
\label{C^*finalbound}
\eeq
For the given values of $i$ and $1 \leq j \leq l_1^* \+ 2$, this
is impossible and hence $B^{(i)}$ must participate in an expansion
chain. The arguments for $D^{(i)}$ are similar to lemma~\ref{l^*lemma}.
\end{IEEEproof}

Similarly Lemmas~\ref{typeB=typeA} and ~\ref{onlyAlemma} can be
suitably modified and the minimum sized chain in this case has
the same structure as defined in Theorem~\ref{minEC}. Analogous to
the $C^*=1$ case, the maximum $PoA$ occurs when $EC_{min}$ is rooted
at $A^{(1)}$. We calculate this $PoA$ with $C^*=M$, below.
\begin{theorem}
{\it When $C^*=M$, the upper bound $\kappa$ on the Price of Anarchy
$PoA$ of game $\hG$ is given by the minimum of
1) $ \kappa =  O(\frac{\log L^*}{M})$ or
2) $\kappa( \log (L^* \kappa) ) \leq \frac{l^* \log |E|}{M}$
}
\label{PoAC^*Theorem}
\end{theorem}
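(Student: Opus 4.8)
The plan is to rerun the edge-counting argument of Theorem~\ref{PoATheorem}, now carrying the parameter $M=C^*$ through every step; the two alternatives in the statement correspond, as before, to the no-expansion and the expansion regimes. In the no-expansion regime, Lemma~\ref{C^*lemma} says that if no expansion chain is rooted at stage~$1$ then $\hC-1 \le 8M + l_1^* + 2$, so $\hC = O(M + \log L^*)$ and hence $PoA = \hC/C^* = \hC/M = O\!\left(1 + (\log L^*)/M\right) = O\!\left((\log L^*)/M\right)$, which is alternative~1. Everything else concerns the expansion regime, where $\hC \gg \log L^*$ and we must lower-bound the number of edges in any graph that supports equilibrium congestion $2^{\hC}$.

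In the expansion regime I would first invoke the $C^*=M$ analogues of Lemmas~\ref{typeB=typeA} and~\ref{onlyAlemma} (the same ``replace a type-$B$ player by type-$A$ players'' argument, but now allowing up to $M$ players to share each optimal edge) to conclude that the minimum-edge self-sufficient chain again uses only type-$A$ players below the root and has the staged shape of $EC_{min}$ from Theorem~\ref{minEC}, rooted at $A^{(1)}$; and, exactly as in the discussion following Theorem~\ref{PoATheorem}, that rooting at $B^{(1)}$ or $D^{(1)}$ only lowers the resulting $PoA$, so $A^{(1)}$ is the worst case. With the chain fixed, I set up the depth-indexed recursion for its edge count. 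The only change from the $C^*=1$ case is that $M$ players may now sit on a common optimal path, so the number of fresh expansion edges at depth $k$ becomes $E_k = (P_{k-1}/M)\,L^*$ in place of $P_{k-1}L^*$, while $P_k = E_k C_k$, $C_k = \hC - k(l^*+1)$ and the termination at congestion~$2$ (depth $d = (\hC-2)/(l^*+1)$) are unchanged; unrolling gives $E_k = (L^*/M)^k\,P_0\,\prod_{t=1}^{k-1}C_t$ with $P_0 = \hC - 1$, together with the usual caveat (as in Theorem~\ref{minEC}) that the chain cannot be terminated at congestion-$1$ edges since that network is unstable.

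Summing the $E_k$ and applying the same Stirling-type estimate used for Eq.~\ref{ecmineq} then yields a bound of the shape
\[
|E| \ \ge\ |EC_{min}| \ \ge\ \Bigl( e^{-\hC}\,\hC^{\hC}\,\sqrt{\hC}\,(L^*/M)^{\hC} \Bigr)^{1/l^*}.
\]
Taking logarithms gives $\hC\bigl(\log\hC + \log L^* - \log M - 1\bigr) \le l^*\log|E|$ up to lower-order terms. Writing $\kappa = PoA = \hC/M$, so $\hC = \kappa M$ and $\log\hC - \log M = \log\kappa$: the explicit $M$ cancels inside the logarithm but survives as the prefactor $\hC = \kappa M$, and the inequality becomes $\kappa M\log(L^*\kappa) \le c\,l^*\log|E|$, i.e. $\kappa\bigl(\log(L^*\kappa)\bigr) \le \frac{l^*\log|E|}{M}$ in the stated form (absorbing constants as in Eq.~\ref{mainpoa}). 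The $PoA$ bound is then the smaller of alternatives~1 and~2.

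The main obstacle, and the part needing the most care, is checking that the technical expansion-rate lemmas (the $C^*=M$ versions of Lemmas~\ref{techlemma1} and~\ref{techlemma2}) still hold, so that ``expand directly to stage $i + l^* + 1$'' remains cheapest once $M$ players may share edges: the inequalities there acquire extra $M$ and $M$-dependent congestion terms, and one must verify they do not reverse for the relevant range of $i$ and $j$. The second delicate point is pushing the modified recursion through the same algebra as Eq.~\ref{ecmineq} to land on the clean closed form above, together with the termination bookkeeping at the congestion-$2$ boundary; both mirror the $C^*=1$ computation but must be redone with the $1/M$ sharing factor in place.
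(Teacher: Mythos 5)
Your proposal follows essentially the same route as the paper's proof: the no-expansion case via Lemma~\ref{C^*lemma} giving $\kappa = O((\log L^*)/M)$, and in the expansion case the same depth-indexed recursion with the $1/M$ sharing factor ($E_k = P_{k-1}L^*/M$, $P_k = E_kC_k$, $C_k = \hC - k(l^*+1)$), the same closed-form lower bound on $|EC_{min}|$, and the same substitution $\kappa = \hC/M$. The only (immaterial) discrepancy is your $P_0 = \hC - 1$ where the paper takes $P_0 = \hC - M$, since up to $M$ players may have the root edge as their optimal path; this affects nothing asymptotically in the regime $\hC \gg M$.
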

\begin{IEEEproof}
Suppose $\hC$ is such that there is no expansion in $G$. This implies
that $\hC \leq  8M +l_1^* +3$. The $PoA$ is $\hC/C^*$ which can be seen
to be $O(\frac{\log L^*}{M})$.
Conversely, if there is expansion we have the following:
At depth 0, $E_0=1$, $C_0 = \hC$ and $P_0 = \hC -M$ since upto $M$
players may have this edge as their optimal. As before $C_k = \hC
-kl^* -k$ and $P_k = E_k C_k$. However, now $E_k$ the number of
expansion edges at depth $k$ becomes $E_k = P_{k-1} L^* /M$ since upto
$M$ players can share the same optimal path.
Using a similar derivation as before we get,
$E_k = ( (L^*)^k/M^{k-1}) \cdot ( (\hC/M) \- 1)
\Pi_{t=1}^{k-1} C_{t}$
which after some algebraic manipulation leads to


\beq
|EC_{min}| \geq \left( \frac{ \hC^{\hC} \sqrt{\hC} (L^*)^{\hC}}{M^{\hC} e^{\hC}}  \right)^{\frac{1}{l^*}}
\eeq
Substituting $\kappa = \hC/M$  and simplifying, we get
$l^* \log|E| \geq \hC(\log \kappa + l^* -1)$
which leads to
\beq
\kappa( \log (L^* \kappa) ) \leq \frac{l^* \log |E|}{M}
\label{C^*PoA}
\eeq
It can be seen that the $PoA$ decreases with increasing optimal
congestion $M$.
\end{IEEEproof}

\section{Conclusions}
\label{section:conclusions}
In this work we have considered exponential bottleneck games
with player utility costs that are exponential functions
on the congestion of the edges of the players paths.
The social cost is $C$, the maximum congestion on any edge
in the graph.
We show that the price of anarchy is poly-log with respect to
the size of the game parameters: $O(\log L \cdot \log |E|)$,
where $L$ is the largest path length in the players strategy sets,
and $E$ is the set of edges in the graph.

\begin{figure}[ht]
\begin{center}
\resizebox{2.5in}{!}{\input{}}
\end{center}
\caption{High price of anarchy with linear utility cost functions}
\label{fig:sumC}
\end{figure}
Several questions remain to be investigated in the future.
A natural question that arises is what is the impact of polynomial cost functions
to the price of anarchy.
Polynomial cost functions with low degree give high price of anarchy.
Consider the game instance of Figure \ref{fig:sumC}
where the player cost is $pc_i = \sum_{e \in p_i} C_e$
which is a linear function on the congestion of the edges
on the player's path.
In this game there $k$ players $\pi_1, \ldots, \pi_k$
where all the players have source $u$ and destination $v$
which are connected by edge $e$.
The graph consists of $k-1$ edge-disjoint paths from $u$ to $v$
each of length $k$.
There is a Nash equilibrium, depicted in the top of Figure \ref{fig:sumC},
where every player chooses to use a path of length 1 on edge $e$.
This is an equilibrium because the cost of each player is $k$,
while the cost of every alternative path is also $k$.
Since the congestion of edge $e$ is $k$ the social cost is $k$.
The optimal solution for the same routing problem is depicted in
the bottom of Figure \ref{fig:sumC},
where every player uses a edge-disjoint path
and thus the maximum congestion on any edge is 1.
Therefore, the price of anarchy is at least $k$.
Since we can choose $k = \Theta(\sqrt n)$, where $n$ is the number of nodes
in the graph,
the price of anarchy is $\Omega(\sqrt n)$.


{
\bibliographystyle{IEEEtran}
\bibliography{routing,oblivious,game}

\begin{thebibliography}{10}
\providecommand{\url}[1]{#1}
\csname url@samestyle\endcsname
\providecommand{\newblock}{\relax}
\providecommand{\bibinfo}[2]{#2}
\providecommand{\BIBentrySTDinterwordspacing}{\spaceskip=0pt\relax}
\providecommand{\BIBentryALTinterwordstretchfactor}{4}
\providecommand{\BIBentryALTinterwordspacing}{\spaceskip=\fontdimen2\font plus
\BIBentryALTinterwordstretchfactor\fontdimen3\font minus
  \fontdimen4\font\relax}
\providecommand{\BIBforeignlanguage}[2]{{%
\expandafter\ifx\csname l@#1\endcsname\relax
\typeout{** WARNING: IEEEtran.bst: No hyphenation pattern has been}%
\typeout{** loaded for the language `#1'. Using the pattern for}%
\typeout{** the default language instead.}%
\else
\language=\csname l@#1\endcsname
\fi
#2}}
\providecommand{\BIBdecl}{\relax}
\BIBdecl

\bibitem{BO07}
R.~Banner and A.~Orda, ``Bottleneck routing games in communication networks,''
  \emph{IEEE Journal on Selected Areas in Communications}, vol.~25, no.~6, pp.
  1173--1179, 2007, also appears in INFOCOM'06.

\bibitem{BM06}
C.~Busch and M.~Magdon-Ismail, ``Atomic routing games on maximum congestion,''
  \emph{Theoretical Computer Science}, vol. 410, no.~36, pp. 3337--3347, August
  2009.

\bibitem{BKV08}
C.~Busch, R.~Kannan, and A.~V. Vasilakos, ``Quality of routing congestion games
  in wireless sensor networks,'' in \emph{Proc. 4th International Wireless
  Internet Conference (WICON)}, Maui, Hawaii, November 2008.

\bibitem{BKV09}
------, ``Approximating congestion + dilation in networks via 'quality of
  routing' games,'' 2009, submitted to {IEEE/ACM} Transactions on Networking.

\bibitem{LMR94}
F.~T. Leighton, B.~M. Maggs, and S.~B. Rao, ``Packet routing and job-scheduling
  in ${O}(congestion+dilation)$ steps,'' \emph{Combinatorica}, vol.~14, pp.
  167--186, 1994.

\bibitem{CMS96}
\BIBentryALTinterwordspacing
R.~Cypher, F.~{Meyer auf der Heide}, C.~Scheideler, and B.~V{\"o}cking,
  ``Universal algorithms for store-and-forward and wormhole routing,'' in
  \emph{In Proc. of the 28th ACM Symp. on Theory of Computing}, 1996, pp.
  356--365. [Online]. Available:
  \url{citeseer.nj.nec.com/cypher96universal.html}
\BIBentrySTDinterwordspacing

\bibitem{LMR99}
T.~Leighton, B.~Maggs, and A.~W. Richa, ``Fast algorithms for finding
  {O}(congestion + dilation) packet routing schedules,'' \emph{Combinatorica},
  vol.~19, pp. 375--401, 1999.

\bibitem{OR97}
R.~Ostrovsky and Y.~Rabani, ``Universal
  {$O($}congestion+dilation+{$\log^{1+\varepsilon}{N})$} local control packet
  switching algorithms,'' in \emph{Proceedings of the 29th Annual {ACM}
  Symposium on the Theory of Computing}, New York, May 1997, pp. 644--653.

\bibitem{RT96}
Y.~Rabani and {\'E}.~Tardos, ``Distributed packet switching in arbitrary
  networks,'' in \emph{Proceedings of the Twenty-Eighth Annual {ACM} Symposium
  on the Theory of Computing}, Philadelphia, Pennsylvania, 22--24 May 1996, pp.
  366--375.

\bibitem{KP99}
E.~Koutsoupias and C.~Papadimitriou, ``Worst-case equilibria,'' in
  \emph{Proceedings of the 16th Annual Symposium on Theoretical Aspects of
  Computer Science (STACS)}, ser. LNCS, vol. 1563.\hskip 1em plus 0.5em minus
  0.4em\relax Trier, Germany: Springer-Verlag, March 1999, pp. 404--413.

\bibitem{P01}
C.~Papadimitriou, ``Algorithms, games, and the {Internet},'' in
  \emph{Proceedings of the 33rd Annual {ACM} Symposium on Theory of Computing
  (STOC)}, {ACM}, Ed., Hersonissos, Crete, Greece, July 2001, pp. 749--753.

\bibitem{monderer1}
D.~Monderer and L.~S. Shapely, ``Potential games,'' \emph{Games and Economic
  Behavior}, vol.~14, pp. 124--143, 1996.

\bibitem{rosenthal1}
R.~W. Rosenthal, ``A class of games possesing pure-strategy {N}ash
  equilibria,'' \emph{International Journal of Game Theory}, vol.~2, pp.
  65--67, 1973.

\bibitem{roughgarden3}
T.~Roughgarden and \'{E}va Tardos, ``How bad is selfish routing,''
  \emph{Journal of the ACM}, vol.~49, no.~2, pp. 236--259, March 2002.

\bibitem{CK05}
G.~Christodoulou and E.~Koutsoupias, ``The price of anarchy of finite
  congestion games,'' in \emph{Proceedings of the 37th Annual {ACM} Symposium
  on Theory of Computing (STOC)}.\hskip 1em plus 0.5em minus 0.4em\relax
  Baltimore, {MD}, {USA}: ACM, May 2005, pp. 67--73.

\bibitem{libman1}
L.~Libman and A.~Orda, ``Atomic resource sharing in noncooperative networks,''
  \emph{Telecomunication Systems}, vol.~17, no.~4, pp. 385--409, 2001.

\bibitem{STZ04}
S.~Suri, C.~D. Toth, and Y.~Zhou, ``Selfish load balancing and atomic
  congestion games,'' \emph{Algorithmica}, vol.~47, no.~1, pp. 79--96, Jan.
  2007.

\bibitem{roughgarden1}
T.~Roughgarden, ``The maximum latency of selfish routing,'' in
  \emph{Proceedings of the Fifteenth Annual {ACM}-{SIAM} Symposium on Discrete
  Algorithms (SODA)}, New Orleans, Louisiana, (USA), January 2004, pp.
  980--981.

\bibitem{roughgarden2}
------, ``Selfish routing with atomic players,'' in \emph{Proc. 16th Symp. on
  Discrete Algorithms (SODA)}.\hskip 1em plus 0.5em minus 0.4em\relax ACM/SIAM,
  2005, pp. 1184--1185.

\bibitem{roughgarden5}
T.~Roughgarden and \'{E}va Tardos, ``Bounding the inefficiency of equilibria in
  nonatomic congestion games,'' \emph{Games and Economic Behavior}, vol.~47,
  no.~2, pp. 389--403, 2004.

\bibitem{CKV02}
A.~Czumaj, P.~Krysta, and B.~V{\"o}cking, ``Selfish traffic allocation for
  server farms,'' in \emph{Proceedings of the 34th Annual ACM Symposium on
  Theory of Computing (STOC)}, Montreal, Quebec, Canada, May 2002, pp.
  287--296.

\bibitem{czumaj1}
Czumaj and Vocking, ``Tight bounds for worst-case equilibria,'' in \emph{ACM
  Transactions on Algorithms (TALG)}.\hskip 1em plus 0.5em minus 0.4em\relax
  ACM, 2007, vol.~3.

\bibitem{FKP02}
D.~Fotakis, S.~C. Kontogiannis, E.~Koutsoupias, M.~Mavronicolas, and P.~G.
  Spirakis, ``The structure and complexity of nash equilibria for a selfish
  routing game,'' in \emph{Proc. Automata, Languages and Programming, 29th
  International Colloquium, (ICALP)}, ser. Lecture Notes in Computer Science,
  vol. 2380.\hskip 1em plus 0.5em minus 0.4em\relax Malaga, Spain: Springer,
  July 2002, pp. 123--134.

\bibitem{GLMMb04}
M.~Gairing, T.~L{\"u}cking, M.~Mavronicolas, and B.~Monien, ``Computing {Nash}
  equilibria for scheduling on restricted parallel links,'' in
  \emph{Proceedings of the 36th Annual {ACM} Symposium on the Theory of
  Computing (STOC)}, Chicago, Illinois, {USA}, June 2004, pp. 613--622.

\bibitem{GLMM04}
------, ``The price of anarchy for polynomial social cost,'' \emph{Theorical
  Computer Science}, vol. 369, no. 1-3, 2006.

\bibitem{GLMMR04}
M.~Gairing, T.~L{\"u}cking, M.~Mavronicolas, B.~Monien, and M.~Rode, ``Nash
  equilibria in discrete routing games with convex latency functions,'' in
  \emph{Proc. Automata, Languages and Programming: 31st International
  Colloquium, {ICALP}}, ser. Lecture Notes in Computer Science, vol.
  3142.\hskip 1em plus 0.5em minus 0.4em\relax Turku, Finland: Springer, July
  2004, pp. 645--657.

\bibitem{KMS02}
E.~Koutsoupias, M.~Mavronicolas, and P.~G. Spirakis, ``Approximate equilibria
  and ball fusion,'' \emph{Theory Comput. Syst.}, vol.~36, no.~6, pp. 683--693,
  2003.

\bibitem{LMMR04}
T.~L{\"u}cking, M.~Mavronicolas, B.~Monien, and M.~Rode, ``A new model for
  selfish routing,'' \emph{Theoretical Computer Science}, vol. 406, no.~3, pp.
  187--206, 2008.

\bibitem{MS01}
Mavronicolas and Spirakis, ``The price of selfish routing,''
  \emph{Algorithmica}, vol.~48, 2007.

\bibitem{correa1}
J.~R. Correa, A.~S. Schulz, and N.~E.~S. Moses, ``Computational complexity,
  fairness, and the price of anarchy of the maximum latency problem,'' in
  \emph{Proc. Integer Programming and Combinatorial Optimization, 10th
  International {IPCO} Conference}, ser. Lecture Notes in Computer Science,
  vol. 3064.\hskip 1em plus 0.5em minus 0.4em\relax New York, {NY}, {USA}:
  Springer, June 2004, pp. 59--73.

\bibitem{FKS02}
D.~Fotakis, S.~C. Kontogiannis, and P.~G. Spirakis, ``Selfish unsplittable
  flows,'' \emph{Theoretical Computer Science}, vol. 348, no. 2-3, pp.
  226--239, 2005.

\end{thebibliography}
}

\end{document}